\newcommand{\sgn}{{\rm sgn}}
\newcommand{\R}{\mathbb{R}}
\newtheorem{theorem}{Theorem}[section]
\newtheorem{lemma}[theorem]{Lemma}
\newtheorem{proposition}{Proposition}
\newtheorem{remark}{Remark}
\newcommand{\acal}{{\mathcal A}}
\begin{document}

\title{On a continuous mixed strategies model for evolutionary game theory}

\author{A.~Boccabella\thanks{Dipartimento di Scienze di Base e Applicate per l'Ingegneria
(SBAI), Sezione di Matematica, Facolt\`a di Ingegneria,
Universit\`a degli Studi \lq\lq Sapienza'' di Roma,
(boccabella@dmmm.uniroma1.it)}\and R.~Natalini\thanks{Istituto per
le Applicazioni del Calcolo \lq\lq M. Picone'', Consiglio
Nazionale delle Ricerche (roberto.natalini@cnr.it)} \and L.
Pareschi\thanks{$^3$ Dipartimento di Matematica \& CMCS,
Universit\`a degli Studi di Ferrara (lorenzo.pareschi@unife.it)}}

\date{September 15, 2010}
\maketitle

\begin{abstract}
We consider an integro-differential model for evolutionary game
theory which describes the evolution of a population adopting
mixed  strategies. Using a reformulation based on the first
moments of the solution, we prove some analytical properties of
the model and global estimates. The asymptotic behavior and the
stability of solutions in the case of two strategies is analyzed
in details. Numerical schemes for two and three strategies which
are able to capture the correct equilibrium states are also
proposed together with several numerical examples.
\end{abstract}

\noindent {\bf Key words.} Continuous mixed strategies, Replicator
dynamics, Evolutionary Game Theory, Kinetic equations, Numerical
methods.

%\tableofcontents

\section{Introduction}

Evolutionary dynamics is based on the ideas of mathematical game theory.
In game theory, a player's strategy in a game is a complete plan
of action at any stage of the game.
A \textit{pure strategy} defines a specific move or action
that a player will follow in every possible attainable situation
in a game.  A player's strategy set is the set of
pure strategies available to that player and defines what
strategies are available  to play. A \textit{mixed
strategy} is an assignment of a probability to each pure strategy.
This allows for a player to select a pure strategy with a given
distribution of probability. Since probabilities are continuous,
there are infinitely many mixed strategies available to a player,
even if their strategy set is finite. Of course, one can regard a
pure strategy as a degenerate case of a mixed strategy, in which
that particular pure strategy is selected with probability $1$ and
every other strategy with probability $0$.

In any game, an important concept is the {\textit{payoff}} that is the number
which represents the motivations of a player. The exact
definition of the payoff depends on the case of interest: payoff
may represent profit, utility, or other continuous measures, or
may simply rank the desirability of outcomes. In all cases, the
payoffs must reflect the motivations of the players. Following the
basic tenet of Darwinism, we may express the success of a player
in a game, that means the player's survival, as the difference
between the player's payoff and the average payoff of all players.

Dynamic models for continuous strategy spaces have received considerable
attention recently both in theoretical biology when considering the
evolution of species traits \cite{abrams, desvillettes, wennberg}
and in economy when predicting rational behavior of individuals whose
payoffs are given through game interactions \cite{marsili, friedman}.

In the present paper we analyze a continuous mixed strategies
model for population dynamics based on an integro-differential
representation. Analogous models based on the replicator equation
with continuous strategy space were recently investigated in
\cite{bomze, cressman, hofbauer, norman, oechssler}. In contrast
with finite strategy spaces, where the notion of equilibrium is
well understood and studied \cite{HofbauerSigmund, Weibull}, the
situation of games with infinite strategies is still missing a
general theory due to several technical and conceptual
difficulties \cite{norman}.

The model here considered is characterized by a continuous density
function $f(t,\bf q)$ of population adopting the $\bf
q\in\mathbb{R}^N$ strategy at time $t$ and presents some analogies
with classical kinetic or mean field approaches. In particular we
show that the model, which contains a cubic nonlinearity in $f$,
can be reformulated in terms of the first moments of the solution.
Such reformulation is essential in our analysis and in the
derivation of numerical approximations.

For the moment based model we prove global existence of solutions
and study the asymptotic behavior and stability of solutions in
the case of two strategies. Two classes of stationary solutions
are found. Continuous stationary solutions are characterized by
every density function with a given mean strategy. If we consider
more general solutions, so that the probability distributions are
no more absolutely continuous with respect to the Lebesgue
measure, another class of stationary solutions is given by
concentrated Dirac masses. Numerical schemes for the two and three
strategy case which are able to capture the correct equilibrium
states are also proposed together with several numerical examples.

The rest of the paper is organized as follows. In Section 2, we
present the model for $N$ pure strategies and prove a priori
estimates and the global existence of solutions. In Section 3, we
put the emphasis on the model with two pure strategies, which can
be reduced to a 1D model, and study the asymptotic behavior of the
solutions and their relation with stationary solutions. Section 4
is dedicated to the numerical approximation of the 1D model and to
numerical tests for the Prisoner's Dilemma and for the Hawk or
Dove games, with results about the a priori estimate, the
asymptotic behavior of the solutions and the stationary solutions.
In Section 5 and 6, we present the 2D model and the numerical
tests for the Rock-Scissors-Paper game. Some final considerations
are reported in the last section.

\section{An integro-differential model for continuous mixed strategies}
\subsection{Setting of the model}

First, we introduce an integro-differential model for
continuous mixed strategies. We start from some preliminary
concepts and definitions taken from \cite{HofbauerSigmund}.
Assume that we have a game where there are $N$ pure strategies
$R_1$ to $R_N$ and that the players can use mixed strategies:
these consists in playing the pure strategies $R_1$ to $R_N$ with
some probabilities $q_1$ to $q_N$ with $q_i\geq 0$ and $\sum
q_i=1$. A strategy corresponds to a point ${\bf{q}}$ in the
simplex
\begin{equation}\label{eq:simplex}
\mathcal{S}_N:=\displaystyle \lbrace {\bf{q}}=(q_1,\dots, q_N)\in \R^{N}:q_i\geq 0 \mbox{ and } \displaystyle \sum_{i=1}^N q_i=1\rbrace.
\end{equation}
The corners of the simplex are the standard unit vectors
${\bf{e}}_i$ with the $i$-th component is $1$ and all others are
$0$ and correspond to the $N$ pure strategies $R_i$,
$i=1,\dots,N$.

Let us denote by $a_{ij}$ the payoff for a player using the pure
strategy $R_i$ against a player using the pure strategy $R_j$. The
$N\times N$ matrix $\mathcal{A}=(a_{ij})$ is said to be the
\textit{payoff matrix}. An $R_i$-strategist obtains the expected
payoff $(\mathcal{A}{\bf{q^{*}}})_i=\sum_j a_{ij}q^{*}_j$ against
a ${\bf{q^{*}}}$-strategist, since $q^{*}_j$ is the probability
that he is met with strategy $R_j$. The payoff for a
${\bf{q}}$-strategist against a ${\bf{q^{*}}}$-strategist is given
by
\begin{equation}\label{eq:payoff}
\displaystyle A({\bf{q}},{\bf{q^{*}}}):={\bf{q}}\cdot \mathcal{A}{\bf{q^{*}}}=\sum_{i,j=1}^Na_{ij}q_iq^{*}_j.
\end{equation}

We consider a population of individuals as a player of the game
and denote by $f(t,{\bf{q}})$ the density of population adopting
the ${\bf{q}}$ strategy at time $t$; the evolution in time of $f$,
due to the dynamics of the game, is driven by
\begin{equation}\label{eq:fpN}
\displaystyle \partial_t
f(t,{\bf{q}})=f(t,{\bf{q}})\,\left(\int_{\mathcal{S}_{N}}
A({\bf{q}},{\bf{q^{*}}})f(t,{\bf{q^{*}}})\,d{\bf{q^{*}}}-\phi(f)\right),
\end{equation} where the term
\begin{equation}\label{eqpayoff}
\int_{\mathcal{S}_{N}} A({\bf{q}},{\bf{q^{*}}})f(t,{\bf{q^{*}}})\,d{\bf{q^{*}}}
\end{equation}
represents the payoff of the strategy ${\bf{q}}$ against all the
others strategies, $A({\bf{q}},{\bf{q^{*}}})$ being the
interacting kernel between the ${\bf{q}}$-strategist and the
${\bf{q^{*}}}$-strategist. The last term of the equation
(\ref{eq:fp}) is defined by
\begin{equation}\label{eq:phi}
\phi(f):=\int_{\mathcal{S}_{N}}\,\int_{\mathcal{S}_{N}} f(t,{\bf{q}})\,A({\bf{q}},{\bf{q^{*}}})\,f(t,{\bf{q^{*}}})\,d{\bf{q^{*}}}\,d{\bf{q}}
\end{equation}
and represents the average payoff of the population.\\

Since $\sum_{i=1}^{N}q_i=1,$ we can reduce the number of
variables, considering $$q_N=1-\sum_{i=1}^{N-1}q_i$$ and obtaining
the $(N-1)$ - dimensional model (\ref{eq:fpN}), on the simplex
\begin{equation}\label{eq:newset}
\mathcal{T}_{N-1}:=\lbrace {\bf{p}}= (p_1,p_2,\dots,p_{N-1})\in \R^{N-1}\,|\,p_i\geq 0\,,\sum_{i=1}^{N-1}p_i\leq 1\rbrace,
\end{equation}
namely
\begin{equation}\label{eq:fp}
\displaystyle
\partial_t f(t,{\bf{p}})=f(t,{\bf{p}})\,\left(\int_{\mathcal{T}_{N-1}}
A({\bf{p}},{\bf{p^{*}}})f(t,{\bf{p^{*}}})\,d{\bf{p^{*}}}-\phi(f)\right),
\end{equation}
with $A({\bf{p}},{\bf{p^{*}}})$ defined by
\begin{equation}\label{AT}
\displaystyle A({\bf{p}},{\bf{p^{*}}}):={\bf{p}}\cdot \mathcal{A}{\bf{p^{*}}}=\sum_{i,j=1}^{N-1}a_{ij}p_ip^{*}_j,
\end{equation}
and $\phi$ defined by
\begin{equation}\label{eq:newphi}
\phi(f):=\int_{\mathcal{T}_{N-1}}\,\int_{\mathcal{T}_{N-1}} f(t,{\bf{p}})\,A({\bf{p}},{\bf{p^{*}}})\,f(t,{\bf{p^{*}}})\,d{\bf{p^{*}}}\,d{\bf{p}}.\\
\end{equation}
\begin{remark}
If we take an initial condition
\begin{equation}\label{init}
f(0,{\bf{p}})=f_0({\bf{p}})\geq 0,
\end{equation}
with $\int_{\mathcal{T}_{N-1}} f_0({\bf{p}})d{\bf{p}}=1$, then it
is easy to see that $f\geq 0$ for all $t> 0$ and if
$f_0({\bf{\bar{p}}})=0$ for some ${\bf{\bar{p}}}$, then
$f(t,{\bf{\bar{p}}})=0$ for all $t>0$. We have also that
\begin{equation}
\int_{\mathcal{T}_{N-1}} f(t,{\bf{p}})d{\bf{p}}=1, \quad \forall\, t> 0.\label{massaf}
\end{equation}
This follows from the mass conservation, by integrating the
equation (\ref{eq:fp}) w.r.t. ${\bf{p}}$ and using (\ref{eq:newphi})
and (\ref{massaf})
\begin{equation}\label{eq:massconservation}
\partial_t \displaystyle \int_{\mathcal{T}_{N-1}} f(t,{\bf{p}})\,d{\bf{p}}=0.
\end{equation}\\
\end{remark}
Let us introduce the moments for $f$:
\begin{equation}\label{moments}
M_{\bf{k}}(f):=\int_{\mathcal{T}_{N-1}}{\bf{p}}^{{\bf{k}}}f({\bf{p}})\,d{\bf{p}}=\int_{\mathcal{T}_{N-1}}p_1^{k_1}\,p_2^{k_2}\,\dots\,p_{N-1}^{k_{N-1}}\,f({\bf{p}})\,d{\bf{p}},
\end{equation}
with ${\bf{k}}:=(k_1,k_2,\dots,k_{N-1})$. Using $M_{\bf{k}}(f)$, the payoff and the average payoff (\ref{eq:newphi}) are expressed respectively by
\begin{equation}\label{eq:Apayoffmoments}
\int_{\mathcal{T}_{N-1}}A({\bf{p}},{\bf{p^*}})\,f(t,{\bf{p^{*}}})\,d{\bf{p^{*}}}=\sum_{j=1}^{N-1}M_{{\bf{e}}_j}(f)\,\left(\sum_{i=1}^{N-1}\vartheta_{i,j}\,p_i+\varsigma_j\right)+a_{N,N}+\sum_{i=1}^{N-1}\upsilon_i\,p_i,
\end{equation}
\begin{equation}\label{eq:phimoments}
\phi (f)=\sum_{j=1}^{N-1}M_{{\bf{e}}_j}(f)\,\left(\sum_{i=1}^{N-1}\vartheta_{i,j}M_{{\bf{e}}_i}(f)+\varsigma_j\right)+a_{N,N}+\sum_{i=1}^{N-1}\upsilon_i\,M_{{\bf{e}}_i}(f),
\end{equation}
where ${\bf{e}}_i\in \R^{N-1}$ is the standard unit vector with
the $i$-th component equal to $1$ and all others equal to $0$,
$\vartheta_{i,j}:=a_{i,j}-a_{i,N}-a_{N,j}+a_{N,N}$,
$\varsigma_j:=a_{N,j}-a_{N,N}$, $\upsilon_i:=a_{i,N}-a_{N,N}$.\\\\
In the final form of the equation (\ref{eq:fp}), that we will use
later in this paper, the only integral terms are the first moments
$M_{{\bf{e}}_i}$: \\
\begin{equation}\label{eq:fpmoments}
\displaystyle \partial_t f(t,{\bf{p}})=f(t,{\bf{p}})\,\left(\sum_{i=1}^{N-1} (p_i-M_{{\bf{e}}_i}(f))\left(\upsilon_i+\sum_{j=1}^{N-1}\vartheta_{i,j}\,M_{{\bf{e}}_j}(f)\right)\right).
\end{equation}
\vspace{0.3cm}
\subsection{Global existence of the solutions}
We consider the Cauchy problem (\ref{eq:fpmoments})-(\ref{init}) for $t\geq 0$ and ${\bf{p}}\in \mathcal{T}_{N-1}$, i.e.
\begin{small}
\begin{equation}\label{CauchyNdim}
\begin{cases}
\displaystyle \partial_t f(t,{\bf{p}})=f(t,{\bf{p}})\,\left(\sum_{i=1}^{N-1} (p_i-M_{{\bf{e}}_i}(f))\left(\upsilon_i+\sum_{j=1}^{N-1}\vartheta_{i,j}\,M_{{\bf{e}}_j}(f)\right)\right)\\
f(0,{\bf{p}})=f_0({\bf{p}}),\\
\end{cases}
\end{equation}
\end{small}
with $f_0({\bf{p}})\geq 0$ and $\int_{\mathcal{T}_{N-1}}f_0({\bf{p}})d{\bf{p}}=1$.
\begin{proposition}[Local existence]\label{Localexistence}
For all $M>0$ there exists $T(M)>0$ such that if
$||f_0({\bf{p}})||\leq M$, then there exists a unique solution
$f\in C([0,\tilde{T}]\times \mathcal{T}_{N-1})$ for the problem
(\ref{CauchyNdim}), for all $\tilde{T}\leq T(M)$.
\end{proposition}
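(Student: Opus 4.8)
The plan is to prove local existence by recasting the Cauchy problem as a fixed-point problem in the Banach space $C([0,\tilde{T}]\times\mathcal{T}_{N-1})$ and applying the contraction mapping (Banach fixed-point) theorem. The key observation is that the right-hand side of (\ref{eq:fpmoments}) depends on $f$ only through its pointwise value $f(t,{\bf p})$ and through the finitely many first moments $M_{{\bf e}_i}(f)$, which are linear continuous functionals of $f$. This structure makes the nonlinearity Lipschitz on bounded sets, which is exactly what a Banach-space Picard iteration requires.

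First I would define, for a given $g\in C([0,\tilde{T}]\times\mathcal{T}_{N-1})$, the quantity
\begin{equation}\label{eq:Fdef}
F[g](t,{\bf p}):=\sum_{i=1}^{N-1}\bigl(p_i-M_{{\bf e}_i}(g)\bigr)\Bigl(\upsilon_i+\sum_{j=1}^{N-1}\vartheta_{i,j}\,M_{{\bf e}_j}(g)\Bigr),
\end{equation}
and introduce the integral operator
\begin{equation}\label{eq:fixedpt}
(\mathcal{G}g)(t,{\bf p}):=f_0({\bf p})+\int_0^t g(s,{\bf p})\,F[g](s,{\bf p})\,ds,
\end{equation}
so that a continuous fixed point of $\mathcal{G}$ is exactly a solution of (\ref{CauchyNdim}). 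I would work on the closed ball $B_{2M}=\{g:\|g\|\le 2M\}$ in the sup norm. Since $\mathcal{T}_{N-1}$ has finite (bounded) measure and $|p_i|\le 1$ on the simplex, the moments satisfy $|M_{{\bf e}_i}(g)|\le C\|g\|$, whence $\|F[g]\|\le C(M)$ for $g\in B_{2M}$, with $C(M)$ a polynomial in $M$ and in the matrix entries through $\vartheta_{i,j},\upsilon_i$.

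The two standard estimates to check are then: (i) \emph{self-mapping}, $\|\mathcal{G}g\|\le M+\tilde{T}\,\|g\|\,\|F[g]\|\le M+\tilde{T}\,C'(M)\le 2M$ for $\tilde{T}$ small enough, using $\|f_0\|\le M$; and (ii) \emph{contraction}, $\|\mathcal{G}g_1-\mathcal{G}g_2\|\le \tilde{T}\,L(M)\,\|g_1-g_2\|$. For (ii) the crucial point is Lipschitz continuity of the map $g\mapsto g\,F[g]$ on $B_{2M}$: writing $g_1F[g_1]-g_2F[g_2]=(g_1-g_2)F[g_1]+g_2\,(F[g_1]-F[g_2])$, one bounds the first term by $\|F[g_1]\|\,\|g_1-g_2\|$ and, for the second, uses that $F$ is a polynomial of degree two in the moments $M_{{\bf e}_j}$, so $\|F[g_1]-F[g_2]\|\le C(M)\sum_j|M_{{\bf e}_j}(g_1-g_2)|\le C'(M)\|g_1-g_2\|$ by linearity and boundedness of the moment functionals. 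Choosing $\tilde{T}\le T(M)$ with $T(M)$ small enough to make both the self-mapping radius condition and $\tilde{T}L(M)<1$ hold gives a unique fixed point in $B_{2M}$; continuity of $\mathcal{G}g$ in $(t,{\bf p})$ is immediate from (\ref{eq:fixedpt}).

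The main obstacle, and the only place requiring care beyond bookkeeping, is controlling the cubic nonlinearity uniformly: although the equation is cubic in $f$, the reformulation (\ref{eq:fpmoments}) isolates all nonlocal dependence into the first moments, so all nonlinear terms are \emph{polynomial in a finite set of bounded linear functionals of $g$ times one pointwise factor of $g$}. This is what keeps $L(M)$ finite and lets $T(M)$ depend only on $M$ and the fixed matrix $\mathcal{A}$; the genuine work is simply tracking how $C(M)$, $C'(M)$, $L(M)$ grow with $M$ so that $T(M)$ can be chosen explicitly. Uniqueness on $[0,\tilde{T}]$ follows either from the uniqueness of the fixed point in $B_{2M}$ or, more robustly, from a Gronwall argument applied directly to the difference of two solutions using the same Lipschitz bound.
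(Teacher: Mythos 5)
Your proposal is correct and follows essentially the same route as the paper: both recast (\ref{CauchyNdim}) as a fixed point of the Picard integral operator $f_0+\int_0^t g\,F[g]\,ds$ and run the standard self-map plus contraction argument in the sup norm on $C([0,\tilde T]\times\mathcal{T}_{N-1})$, exploiting that the nonlocality enters only through the bounded linear moment functionals $M_{{\bf e}_i}$. The only differences are cosmetic — you work on the ball $\{\|g\|\le 2M\}$ with unspecified constants $C(M),L(M)$, while the paper centers the ball at $f_0$ with radius $R$ and optimizes explicit constants $T_1(R,M),T_2(R,M)$ over $R\ge M$ — and your closing Gronwall remark in fact handles uniqueness outside the ball, a point the paper leaves implicit.
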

\begin{proof}
Let us define
\begin{equation}\label{eq:time}
T(M):= \max_{R\geq M}\, \min (T_1(R,M),T_2(R,M)),
\end{equation}
where
\begin{eqnarray*}\label{eq:appfunctions}
&T_1(R,M)&:=\dfrac{R}{(R+M)(1+R+M)(V+\Theta(R+M))},\\
&T_2(R,M)&:=\dfrac{1}{S(R,M)},
\end{eqnarray*}
and
\begin{small}
\begin{equation*}\label{fappoggio}
S(R,M):=(1+R+M)(V+\Theta(R+M))+(R+M)(V+\Theta(1+2(R+M)))+2\Theta(R+M)^2,
\end{equation*}
\end{small}
\begin{equation}\label{eq:Vtheta}
V:=\sum_{i=1}^{N-1}|\upsilon_i|, \quad \mbox{and} \quad \Theta :=\sum_{i=1}^{N-1}\sum_{j=1}^{N-1}|\vartheta_{i,j}|.
\end{equation}
We define the set
\begin{equation}\label{eq:setBR}
B_R:=\lbrace g \in C([0,T]\times \mathcal{T}_{N-1})\,|\, |g-f_0|_C\leq R\rbrace,
\end{equation}
for $R\geq M$, and, for all $\, g\in B_R$, the operator
\begin{small}
\begin{equation}\label{eq:operatore}
G(g)({\bf{p}}):=f_0({\bf{p}})+\int_0^t g({\bf{p}})\,\left(\sum_{i=1}^{N-1} (p_i-M_{{\bf{e}}_i}(g))\left(\upsilon_i+\sum_{j=1}^{N-1}\vartheta_{i,j}\,M_{{\bf{e}}_j}(g)\right)\right)\,dt.
\end{equation}
\end{small}
We have that for all $\, g\in B_R,$
$$|M_{{\bf{e}}_k}(g)|\leq R+M,\qquad \forall\, k,$$
$$|p_k-M_{{\bf{e}}_k}(g)|\leq 1+R+M,\qquad \forall\, k.$$
$$$$
It is easy to prove that $G(g)\in B_R$ for $t\leq T(M)$:
\begin{small}
\begin{eqnarray*}
|G(g)-f_0| \leq \int_0^t |g-f_0+f_0|\,\left(\sum_{i=1}^{N-1} |p_i-M_{{\bf{e}}_i}(g)|\left(|\upsilon_i|+\sum_{j=1}^{N-1}|\vartheta_{i,j}|\,|M_{{\bf{e}}_j}(g)|\right)\right)\,dt\\
\leq \int_0^t (R+M)\,\left(\sum_{i=1}^{N-1} (1+R+M)\left(|\upsilon_i|+(R+M)\sum_{j=1}^{N-1}|\vartheta_{i,j}|\right)\right)\,dt\\\\
=t (R+M)(1+R+M)(V+\Theta(R+M))\leq R.
\end{eqnarray*}
\end{small}\\

The operator $G(g)$ is a contraction on $B_R$ for $t\leq T(M)$: for all $\, g,\, \tilde{g} \in B_R$\\
\begin{small}
\begin{eqnarray*}
|G(g)-G(\tilde{g})|&=&\\
&=&\vert \int_0^t  \left[g\,\left(\sum_{i=1}^{N-1} (p_i-M_{{\bf{e}}_i}(g))\left(\upsilon_i+\sum_{j=1}^{N-1}\vartheta_{i,j}\,M_{{\bf{e}}_j}(g)\right)\right)\right]dt \\
&-&\int_0^t \left[\tilde{g}\left(\sum_{i=1}^{N-1} (p_i-M_{{\bf{e}}_i}(\tilde{g}))\left(\upsilon_i+\sum_{j=1}^{N-1}\vartheta_{i,j}\,M_{{\bf{e}}_j}(\tilde{g})\right)\right)\right] \, dt\vert \\
&\leq & \int_0^t  |g-\tilde{g}|\left(\sum_{i=1}^{N-1} |p_i-M_{{\bf{e}}_i}(g)|\left(|\upsilon_i+\sum_{j=1}^{N-1}|\vartheta_{i,j}|\,|M_{{\bf{e}}_j}(g)|\right)\right)\,dt
\\&+&\int_0^t  |\tilde{g}|\left[ \left(\sum_{i=1}^{N-1} |p_i-M_{{\bf{e}}_i}(g)|\left(|\upsilon_i|+\sum_{j=1}^{N-1}|\vartheta_{i,j}|\,|M_{{\bf{e}}_j}(g)|\right)\right)\right]dt
\end{eqnarray*}
\begin{equation*}
-\int_0^t |\tilde{g}|\left[\left(\sum_{i=1}^{N-1} |p_i-M_{{\bf{e}}_i}(\tilde{g})|\left(|\upsilon_i|+\sum_{j=1}^{N-1}|\vartheta_{i,j}|\,|M_{{\bf{e}}_j}(\tilde{g})|\right)\right) \right] dt
\end{equation*}
\begin{equation*}
= \int_0^t  |g-\tilde{g}|\left(\sum_{i=1}^{N-1} |p_i-M_{{\bf{e}}_i}(g)|\left(|\upsilon_i|+\sum_{j=1}^{N-1}|\vartheta_{i,j}|\,|M_{{\bf{e}}_j}(g)|\right)\right)\,dt
\end{equation*}
\begin{equation*}
+\int_0^t  |\tilde{g}|\left(\sum_{i=1}^{N-1}|\upsilon_i||M_{{\bf{e}}_i}(g)-M_{{\bf{e}}_i}(\tilde{g})|\right)dt
\end{equation*}
\begin{equation*}
+\int_0^t |\tilde{g}|\left( \sum_{i=1}^{N-1}\left(\sum_{j=1}^{N-1}|\vartheta_{i,j}|\,|M_{{\bf{e}}_j}(g)-M_{{\bf{e}}_j}(\tilde{g})||p_i-(M_{{\bf{e}}_i}(g)+M_{{\bf{e}}_i}(\tilde{g}))|\right)\right) dt
\end{equation*}\end{small}
\begin{equation*}
\leq t\,S(R,M) \, sup |g-\tilde{g}|.
\end{equation*}
\\
The last inequality is obtained using the following inequalities, for all $\, g,\, \tilde{g} \in B_R$:
$$|\tilde{g}|\leq R+M,$$
$$|M_{{\bf{e}}_k}(g)-M_{{\bf{e}}_k}(\tilde{g})|\leq sup|g-\tilde{g}|,\qquad \forall\, k$$
$$|M_{{\bf{e}}_{k_1}}(g)M_{{\bf{e}}_{k_2}}(\tilde{g})-M_{{\bf{e}}_{k_2}}(g)M_{{\bf{e}}_{k_1}}(\tilde{g})|\leq 2(R+M)sup|g-\tilde{g}|\qquad \forall\, k_1,\,k_2.$$\\
We have that $G(g)$ is a contraction on $B_R$ for all $t\leq T(M)$
and so problem (\ref{CauchyNdim}) admits a unique solution $f\in
C([0,\tilde{T}]\times \mathcal{T}_{N-1})$, for all $\tilde{T}\leq
T(M)$.
\end{proof}
We proved the local existence of solution in a time interval
$(0,\tilde{T})$, depending on $M$. Now we define $T_{\max}$ as the
time limit in which this local solution exists.
\begin{lemma}\label{limiteinfinito}
If $T_{\max}<+\infty$ then $\limsup_ {t\rightarrow T_{\max}^{-}}||f||_{\infty}=+\infty.$
\end{lemma}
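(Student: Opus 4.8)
The plan is to establish the contrapositive, in the spirit of a standard blow-up alternative: I will show that if $\|f\|_\infty$ stays bounded as $t \to T_{\max}^-$, then the solution can be continued past $T_{\max}$, contradicting the maximality of $T_{\max}$. The engine of the argument is the crucial feature of Proposition \ref{Localexistence} that the existence time $T(M)$ depends on the initial datum \emph{only} through the bound $M$ on its sup-norm (and through the fixed structural constants $V$ and $\Theta$), and not on the datum itself nor on the starting time. This uniformity is precisely what enables a restart argument.

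Concretely, I argue by contradiction. Suppose $T_{\max} < +\infty$ and $\limsup_{t \to T_{\max}^-} \|f\|_\infty = L < +\infty$. Since $f \in C([0,\tilde T]\times\mathcal{T}_{N-1})$ for every $\tilde T < T_{\max}$, the map $t \mapsto \|f(t,\cdot)\|_\infty$ is continuous on $[0,T_{\max})$ and hence bounded on every compact subinterval; combining this with the finite $\limsup$ near $T_{\max}$, I obtain a uniform bound $\|f(t,\cdot)\|_\infty \leq M$ for all $t \in [0,T_{\max})$, for some $M < +\infty$. With this $M$ fixed, I choose a restart time $t_0 \in (T_{\max} - T(M),\, T_{\max})$, which is possible because $T(M) > 0$. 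At this time $f(t_0,\cdot)$ is continuous on $\mathcal{T}_{N-1}$ and satisfies $\|f(t_0,\cdot)\|_\infty \leq M$, so it is an admissible initial datum.

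Applying Proposition \ref{Localexistence} with initial datum $f(t_0,\cdot)$ then yields a unique solution on $[t_0,\, t_0 + T(M)]$. Since $t_0 > T_{\max} - T(M)$, the right endpoint satisfies $t_0 + T(M) > T_{\max}$. By the uniqueness statement of Proposition \ref{Localexistence}, this new solution coincides with the original $f$ on the overlap $[t_0,\, T_{\max})$, so the two glue into a solution defined on $[0,\, t_0 + T(M)]$, an interval strictly larger than $[0,T_{\max})$. This contradicts the definition of $T_{\max}$ as the maximal time of existence, and the lemma follows.

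I expect the only genuinely delicate point to be the passage from the hypothesis on the $\limsup$ to a uniform bound on the whole interval $[0,T_{\max})$, rather than merely in a left-neighborhood of $T_{\max}$; this rests on the local continuity of $t \mapsto \|f(t,\cdot)\|_\infty$ already supplied by Proposition \ref{Localexistence}. Everything else is bookkeeping built on the fact that $T(M)$ depends on the data solely through $M$. One may also note that positivity and unit mass are inherited by the restarted datum, as recorded in the Remark above, although the continuation argument itself requires only the sup-norm bound.
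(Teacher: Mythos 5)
Your proof is correct and follows essentially the same route as the paper's: a contradiction/restart argument in which the finite $\limsup$ gives a sup-norm bound near $T_{\max}$, Proposition \ref{Localexistence} (whose existence time $T(M)$ depends only on that bound) is applied at a restart time $t_0 > T_{\max}-T(M)$, and the resulting solution extends past $T_{\max}$. The only cosmetic differences are that you upgrade the bound to all of $[0,T_{\max})$ and spell out the uniqueness-based gluing step, both of which the paper leaves implicit.
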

\begin{proof}
The proof is by contradiction. Let be $\limsup_{t\rightarrow
T_{\max}^{-}}||f||_{\infty}=\bar{M}<\infty$. This means that
$\forall\, \varepsilon >0$ there exists $\delta_{\varepsilon}$
such that $\forall\, \bar{t}\in
(T_{\max}-\delta_{\varepsilon},T_{\max})$ we have
$||f(\bar{t})||_{\infty}\leq \bar{M}+\varepsilon$. Now we fix
$\bar{t}>T_{\max}-\min(\delta _{\varepsilon},
T(\bar{M}+\varepsilon))$ and consider the problem
(\ref{CauchyNdim}) with initial data
$(t_0,f_0({\bf{p}}))=(\bar{t},f(\bar{t}))$. Using Lemma
\ref{Localexistence}, we have that there exists a solution $f\in
C([0,\bar{t}+T(\bar{M}+\varepsilon)]\times \mathcal{T}_{N-1})$ and
this is in contradiction with the definition of $T_{\max}$ because
$\bar{t}+T(\bar{M}+\varepsilon)>T_{\max}$.
\end{proof}
\begin{lemma}\label{estimate}
The solution $f$ of the Cauchy problem (\ref{CauchyNdim}) verifies the following a priori estimate
\begin{equation}\label{eq:festimate}
||f(t,{\bf{p}})||_{L^{\infty}}\leq \max_{{\bf{p}}}(f_0({\bf{p}}))\,e^{\mathcal{B}t},
\end{equation}
with $\mathcal{B}:=\sum_{i=1}^{N-1}\left(|\upsilon_i|+\sum_{j=1}^{N-1}|\vartheta_{i,j}|\right).$
\end{lemma}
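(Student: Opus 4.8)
The plan is to exploit the fact that, for each fixed strategy ${\bf{p}}\in\mathcal{T}_{N-1}$, equation (\ref{CauchyNdim}) is merely a scalar linear ODE in the time variable, of the form $\partial_t f(t,{\bf{p}})=C(t,{\bf{p}})\,f(t,{\bf{p}})$, where the coefficient
\[
C(t,{\bf{p}}):=\sum_{i=1}^{N-1}(p_i-M_{{\bf{e}}_i}(f))\left(\upsilon_i+\sum_{j=1}^{N-1}\vartheta_{i,j}\,M_{{\bf{e}}_j}(f)\right)
\]
couples all strategies only through the first moments $M_{{\bf{e}}_i}(f)$. Integrating this ODE explicitly gives $f(t,{\bf{p}})=f_0({\bf{p}})\exp\left(\int_0^t C(s,{\bf{p}})\,ds\right)$, so the entire estimate reduces to producing a bound for $C$ that is uniform in both ${\bf{p}}$ and $t$.

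The key step, and the place where the specific structure of the model enters, is to control the moments. Because every ${\bf{p}}\in\mathcal{T}_{N-1}$ satisfies $0\leq p_i\leq 1$, and because $f\geq 0$ together with the mass conservation (\ref{massaf}) gives $\int_{\mathcal{T}_{N-1}}f\,d{\bf{p}}=1$, the first moments obey $0\leq M_{{\bf{e}}_i}(f)=\int_{\mathcal{T}_{N-1}}p_i\,f\,d{\bf{p}}\leq 1$ for every $i$ and every $t$. From these two facts I obtain the pointwise bounds $|p_i-M_{{\bf{e}}_i}(f)|\leq 1$ and $|M_{{\bf{e}}_j}(f)|\leq 1$, and hence
\[
|C(t,{\bf{p}})|\leq\sum_{i=1}^{N-1}\left(|\upsilon_i|+\sum_{j=1}^{N-1}|\vartheta_{i,j}|\right)=\mathcal{B},
\]
uniformly in ${\bf{p}}$ and $t$.

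Plugging this uniform bound into the integrated form yields $f(t,{\bf{p}})\leq f_0({\bf{p}})\,e^{\mathcal{B}t}\leq\max_{\bf{p}}f_0({\bf{p}})\,e^{\mathcal{B}t}$, which is exactly (\ref{eq:festimate}) after taking the supremum over ${\bf{p}}$. I expect no genuine obstacle in the estimates themselves; the only subtle point — and the thing worth emphasizing — is recognizing that the \emph{uniform-in-time} constant $\mathcal{B}$ is available precisely because the normalization $\int_{\mathcal{T}_{N-1}}f\,d{\bf{p}}=1$ pins the moments inside $[0,1]$, rather than leaving them controlled only by the growing quantity $R+M$ that appeared in the local-existence argument of Proposition \ref{Localexistence}. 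This time-independent control is exactly what pairs with Lemma \ref{limiteinfinito}: since $\max_{\bf{p}}f_0({\bf{p}})\,e^{\mathcal{B}t}$ is finite for every finite $t$, the $L^\infty$ norm cannot blow up in finite time, forcing $T_{\max}=+\infty$ and thereby upgrading local to global existence.
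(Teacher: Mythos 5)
Your proposal is correct and follows essentially the same route as the paper: both arguments hinge on the observation that nonnegativity of $f$ and the normalization $\int_{\mathcal{T}_{N-1}}f\,d{\bf{p}}=1$ force $0\leq M_{{\bf{e}}_i}(f)\leq 1$, which bounds the growth coefficient by $\mathcal{B}$ uniformly in ${\bf{p}}$ and $t$. The only cosmetic difference is that you integrate the resulting scalar linear ODE explicitly, while the paper states the differential inequality $\partial_t f\leq \mathcal{B}f$ and invokes the Gronwall inequality; these are the same estimate.
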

\begin{proof}
Since $0\leq M_{{\bf{e}}_i}(f)\leq 1$, for all $i$
$$\partial_t f \leq f\,\left(\sum_{i=1}^{N-1}\left(|\upsilon_i|+\sum_{j=1}^{N-1}|\vartheta_{i,j}|\right)\right).$$\\
The proof follows easily using the Gronwall inequality.
\end{proof}
Lemma \ref{limiteinfinito} and Lemma \ref{estimate} provide the following Theorem:
\begin{theorem}[Global existence]\label{globalexistence}
There exists a unique global solution $f\in C([0,+\infty)\times \mathcal{T}_{N-1})$ to problem (\ref{CauchyNdim}).
\end{theorem}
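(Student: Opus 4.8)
The plan is to combine the local existence result with the a priori bound through a standard continuation (blow-up alternative) argument. First I would invoke Proposition \ref{Localexistence} to extend the local solution to its maximal interval of existence $[0,T_{\max})$, with $T_{\max}$ the maximal existence time introduced just before Lemma \ref{limiteinfinito}. Uniqueness on this interval is inherited from the Banach fixed point argument already carried out in Proposition \ref{Localexistence}: since the operator $G$ is a contraction on each sufficiently short time window, the local solution is unique, and gluing these unique pieces together yields uniqueness of the maximal solution.

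The core of the argument is to show $T_{\max}=+\infty$ by contradiction. Suppose $T_{\max}<+\infty$. By Lemma \ref{limiteinfinito}, this forces $\limsup_{t\to T_{\max}^-}||f||_{\infty}=+\infty$. On the other hand, Lemma \ref{estimate} furnishes the a priori estimate $||f(t,{\bf p})||_{L^{\infty}}\leq \max_{{\bf p}}(f_0({\bf p}))\,e^{\mathcal{B}t}$, which stays finite for every finite $t$; in particular $\limsup_{t\to T_{\max}^-}||f||_{\infty}\leq \max_{{\bf p}}(f_0({\bf p}))\,e^{\mathcal{B}T_{\max}}<+\infty$. These two conclusions are incompatible, so the assumption $T_{\max}<+\infty$ must fail, and the solution therefore exists on all of $[0,+\infty)$.

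The one point requiring care is to make sure Lemma \ref{estimate} genuinely applies throughout the whole interval $[0,T_{\max})$ and not merely on a single short time window. That estimate rests on the bound $0\leq M_{{\bf e}_i}(f)\leq 1$, which in turn uses the nonnegativity of $f$ together with the mass conservation $\int_{\mathcal{T}_{N-1}}f\,d{\bf p}=1$ recorded in the Remark. I would thus verify that positivity and unit mass propagate along the entire maximal solution, so that the differential inequality $\partial_t f\leq \mathcal{B}f$ holds on all of $[0,T_{\max})$ and Gronwall delivers the bound up to $T_{\max}$. Once this is secured the contradiction closes and no further work is needed, since the remainder of the proof is just a mechanical assembly of Lemma \ref{limiteinfinito} and Lemma \ref{estimate}; this propagation check is the only genuinely delicate step.
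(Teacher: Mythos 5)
Your proposal is correct and follows exactly the route the paper intends: the paper's ``proof'' of Theorem~\ref{globalexistence} is precisely the combination of the blow-up alternative (Lemma~\ref{limiteinfinito}) with the a priori bound (Lemma~\ref{estimate}), ruling out $T_{\max}<+\infty$ by contradiction, which is what you do. Your extra check that nonnegativity and unit mass propagate along the maximal solution (so that $0\leq M_{{\bf e}_i}(f)\leq 1$ and hence the Gronwall bound hold up to $T_{\max}$) is exactly the content of the paper's first Remark, so nothing is missing.
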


Now we present a simple property of the moments that we will use
later in the paper to study the asymptotic behavior of the
solutions for $2\times 2$ games.
\begin{lemma}\label{momenti}
If $f\in C(\mathcal{T}_{N-1})$ then $0<M_{{\bf{k}}}(f)<1$, for all ${\bf{k}}\in \R^{N-1}$.
\end{lemma}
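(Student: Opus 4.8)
My plan is to combine three elementary facts about the solution $f$: that it is nonnegative, that it has unit mass $\int_{\mathcal{T}_{N-1}}f\,d\mathbf{p}=1$ (recorded in the Remark), and that every coordinate obeys $0\le p_i\le 1$ on $\mathcal{T}_{N-1}$. I would read the statement for a nonzero multi-index $\mathbf{k}$ with nonnegative entries, since for $\mathbf{k}=\mathbf{0}$ one has $M_{\mathbf{0}}(f)=1$ and so that case must be excluded; these are also the only moments (indeed the first moments $M_{\mathbf{e}_i}$) used later. With these hypotheses the monomial $\mathbf{p}^{\mathbf{k}}=p_1^{k_1}\cdots p_{N-1}^{k_{N-1}}$ is nonnegative and bounded by $1$, so the crude bounds $0\le M_{\mathbf{k}}(f)\le\int_{\mathcal{T}_{N-1}} f=1$ are immediate; the real task is to upgrade both to strict inequalities.

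First I would note that on the interior of the simplex both estimates on the monomial become strict: if $\sum_i p_i<1$ and $p_i>0$ for each $i$, then every $p_i<1$, so $0<\mathbf{p}^{\mathbf{k}}<1$ as soon as some $k_i>0$. Because the interior carries full Lebesgue measure in $\mathcal{T}_{N-1}$, this yields $0<\mathbf{p}^{\mathbf{k}}<1$ almost everywhere on the simplex.

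Next I would locate a set of positive measure on which $f$ is strictly positive. Since $f$ is continuous, nonnegative and integrates to $1>0$, it is positive at some point and hence, by continuity, on a relatively open set $U\subset\mathcal{T}_{N-1}$ of positive measure; discarding the negligible boundary leaves a subset $U_0\subset U$ of the interior, still of positive measure, on which simultaneously $f>0$ and $0<\mathbf{p}^{\mathbf{k}}<1$. From $\int_{U_0}\mathbf{p}^{\mathbf{k}}f\,d\mathbf{p}>0$ I would conclude $M_{\mathbf{k}}(f)>0$, and from $\int_{U_0}(1-\mathbf{p}^{\mathbf{k}})f\,d\mathbf{p}>0$, together with the identity $1-M_{\mathbf{k}}(f)=\int_{\mathcal{T}_{N-1}}(1-\mathbf{p}^{\mathbf{k}})f\,d\mathbf{p}$, I would conclude $M_{\mathbf{k}}(f)<1$. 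The only genuinely delicate step is this last one: a pointwise comparison $\mathbf{p}^{\mathbf{k}}f\le f$ only returns the non-strict bound, so the strict inequality forces me to exhibit the positive-measure overlap $U_0$ between $\{f>0\}$ and the interior, where the nonnegative integrand $(1-\mathbf{p}^{\mathbf{k}})f$ contributes a strictly positive amount.
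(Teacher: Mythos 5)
Your proof is correct and follows essentially the same route as the paper's: both locate an open set on which $f>0$ (guaranteed by continuity, nonnegativity and unit mass), use the strict pointwise bounds $0<\mathbf{p}^{\mathbf{k}}<1$ there, and split the integral over that set and its complement to upgrade the crude bounds $0\le M_{\mathbf{k}}(f)\le 1$ to strict inequalities. If anything, your write-up is more careful than the paper's, since you explicitly exclude $\mathbf{k}=\mathbf{0}$ (for which $M_{\mathbf{0}}(f)=1$, so the statement's ``for all $\mathbf{k}\in\R^{N-1}$'' is too generous) and you justify working in the interior of the simplex, where the strict monomial bounds actually hold.
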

\begin{proof}
Let $S$ be an open set, such that
\begin{equation*}
S\subset \mathcal{T}_{N-1}\quad \mbox{with}\quad  f({\bf{p}})>0\,\,\, \forall\, {\bf{p}}\in S.
\end{equation*}
The set $S$ is not empty because $f\geq 0$ and its integral over
$\mathcal{T}_{N-1}$ is equal to $1$. We have
$f({\bf{p}})>{\bf{p}}^{{\bf{k}}}\,f({\bf{p}})>0\,\,\forall\,
{\bf{p}}\in S$, and so
\begin{equation}\label{disug}
M_{{\bf{k}}}\geq \int_S{\bf{p}}^{{\bf{k}}}f({\bf{p}})\,d{\bf{p}} >0.
\end{equation}
We have also that
\begin{equation}\label{disug0}
0<\int_S {\bf{p}}^{{\bf{k}}}\,f({\bf{p}})\,d{\bf{p}}<\int_S f({\bf{p}})\,d{\bf{p}},
\end{equation}
Since $\int_{\mathcal{T}_{N-1}\setminus S}f({\bf{p}})\,d{\bf{p}}\geq 0$ and (\ref{disug0}) holds, we have
$$1=\int_{\mathcal{T}_{N-1}} f({\bf{p}})\,d{\bf{p}}=\int_{\mathcal{T}_{N-1}\setminus S}f({\bf{p}})\,d{\bf{p}} +\int_S f({\bf{p}})\,d{\bf{p}}>M_{{\bf{k}}}(f),\quad \forall\, {\bf{k}}.$$
\end{proof}

\section{Two strategies games}
Assume there are two different strategies, whose interplay is ruled by the payoff matrix:
\[ \acal = \left(\begin{matrix} a & b \cr c & d\end{matrix}\right).\]
In this case the simplex $\mathcal{T}_1$ is just the interval
$[0,1]$ and so we have a population where individuals are going to
play the first strategy with probability $p\in [0,1]$ and the
second strategy with probability $1-p$. The payoff
(\ref{eq:payoff}) is given by
\begin{equation}\label{mix_payoff}
\begin{array}{ll}
A({\bf{p}},{\bf{p^*}}):=&\left(\begin{array}{c} p \\ 1-p \end{array}\right) \left(\begin{matrix} a & b \cr c & d\end{matrix}\right)
\left(\begin{array}{c} p^* \\ 1-p^* \end{array}\right)\\ \\
&=(a+d-b-c)pp^*+(b-d) p+(c-d)p^*+d\\ \\&=
\alpha pp^*+\beta p +\gamma p^*+\delta, \end{array}\end{equation}
with
\begin{equation}\label{eq:alphabeta}
\alpha:=a+d-b-c, \quad \beta:=b-d,
\quad \gamma:=c-d, \quad \delta:=d.
\end{equation}
The one dimensional Cauchy problem (\ref{CauchyNdim}) reads
\begin{equation}\label{eq:model}
\begin{cases}
\partial_t f(p)=f(p)\left[(\alpha M_1(f)+\beta)(p-M_1(f))\right] \quad t\geq 0,\,p \in[0,1], \\
f(0,p)=f_0(p), \qquad p\in [0,1],
\end{cases}
\end{equation}
with $f_0(p)\geq 0$ and $\int_0^1 f_0(p)dp=1$.\\
\subsection{Asymptotic behavior of the solutions}
We want to understand what happens asymptotically. We start with a result on the curve of change of sign for $\partial_t\,f$.
\begin{proposition}\label{curvesign}
If $f\in C([0,1])$ then for all $t \geq 0$ there exists $\bar{p}=\bar{p}(t)\in (0,1)$ such that $M_1(f(t))=\bar{p}(t)$, namely
\begin{eqnarray} \label{eq:derivataf}
&\partial_t f(t,\bar{p}(t))&=0, \\
&\sgn(\partial_t f(t,p))&=-\sgn(\alpha M_1(f(t,p))+\beta) \qquad \forall\, p<\bar{p}(t), \\
&\sgn(\partial_t f(t,p))&=\sgn(\alpha M_1(f(t,p))+\beta) \qquad \forall\, p>\bar{p}(t).
\end{eqnarray}
\end{proposition}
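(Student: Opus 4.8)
The plan is to exploit the fully factored form of the right-hand side of \eqref{eq:model}. Writing the evolution law as $\partial_t f(t,p) = f(t,p)\,(\alpha M_1(f(t)) + \beta)\,(p - M_1(f(t)))$, one sees that the only $p$-dependence in the two bracketed factors sits in the linear term $p - M_1(f(t))$, while $f(t,p) \geq 0$ cannot reverse any sign. I would therefore set $\bar p(t) := M_1(f(t)) = \int_0^1 p\,f(t,p)\,dp$, so that the claimed identity $M_1(f(t)) = \bar p(t)$ holds by definition, and reduce the proposition to two points: (i) that $\bar p(t)$ lies strictly inside $(0,1)$, and (ii) that the sign of $\partial_t f$ can be read off from the two scalar factors.

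For (i) I would invoke Lemma \ref{momenti} in the one-dimensional case $N-1 = 1$ with $\mathbf{k} = 1$: since $f(t,\cdot) \in C([0,1])$ is nonnegative with unit mass by \eqref{massaf}, the lemma yields $0 < M_1(f(t)) < 1$, that is $\bar p(t) \in (0,1)$, for every $t \geq 0$. This is the only step carrying genuine content, and it is precisely what Lemma \ref{momenti} was designed to provide; in particular it rules out the mean strategy collapsing onto a boundary pure strategy.

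For the remaining assertions I would argue directly from the factored form. Substituting $p = \bar p(t)$ makes the factor $p - M_1(f(t))$ vanish, giving $\partial_t f(t,\bar p(t)) = 0$ regardless of the other factors, which is \eqref{eq:derivataf}. For the sign relations, since $f(t,p) \geq 0$ for all $t$ (see the Remark following \eqref{massaf}), wherever $f(t,p) > 0$ the sign of $\partial_t f$ equals that of $(\alpha M_1(f(t)) + \beta)(p - \bar p(t))$. For $p < \bar p(t)$ the factor $p - \bar p(t)$ is negative and flips the sign, yielding $\sgn(\partial_t f) = -\sgn(\alpha M_1(f(t)) + \beta)$; for $p > \bar p(t)$ it is positive and preserves it, yielding $\sgn(\partial_t f) = \sgn(\alpha M_1(f(t)) + \beta)$.

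I do not expect a serious obstacle: the proposition is essentially a structural reading of the factored evolution law, with Lemma \ref{momenti} supplying the strict interiority of the mean. The only delicate point is the edge set where $f(t,p) = 0$, on which $\partial_t f$ also vanishes; there the sign equalities hold in the convention $\sgn(0) = 0$, and by the persistence property in the Remark this zero set does not move in time, so it forms a harmless time-independent exception.
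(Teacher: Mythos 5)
Your proof is correct and follows exactly the route the paper takes: the paper's entire proof consists of the remark that the result ``is easily obtained by Lemma \ref{momenti}'', i.e., strict interiority $0<M_1(f(t))<1$ plus reading the signs off the factored right-hand side $f\,(\alpha M_1+\beta)(p-M_1)$, which is precisely your argument. Your extra caveat about the exceptional set where $f(t,p)=0$ (where $\partial_t f$ vanishes and the stated sign identities degenerate) is a genuine subtlety that the paper silently glosses over, so it strengthens rather than departs from the paper's reasoning.
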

The proof of Proposition \ref{curvesign} is easily obtained by Lemma \ref{momenti}.\\

Let us write the equation for the first moment $M_1(f)$:
\begin{equation}\label{eq:M1}
M'_1(t)=(\alpha M_1(f)+\beta)(M_2(f)-M_1^2(f)).
\end{equation}
The Jensen inequality gets
\begin{equation} \label{eq:M1dis}
M_1^2(f)=\left( \int_0^1 p\, f\, dp\right)^2\leq \int_0^1 p^2 \, f \,dp=M_2,
\end{equation}
and so
\begin{equation}\label{eq:M1sgn}
\sgn(M'_1(t))=\sgn(\alpha M_1(f)+\beta).
\end{equation}
There are four different possible cases:
\begin{description}
\item[Case a] $-\dfrac{\beta}{\alpha} \notin (0,1)$. \figurename~\ref{fig:ABCD} shows that it is possible if and only if $(\alpha, \beta) \in A\,\cup
B$.\\
\begin{figure}[htp]
\center \epsfig{file=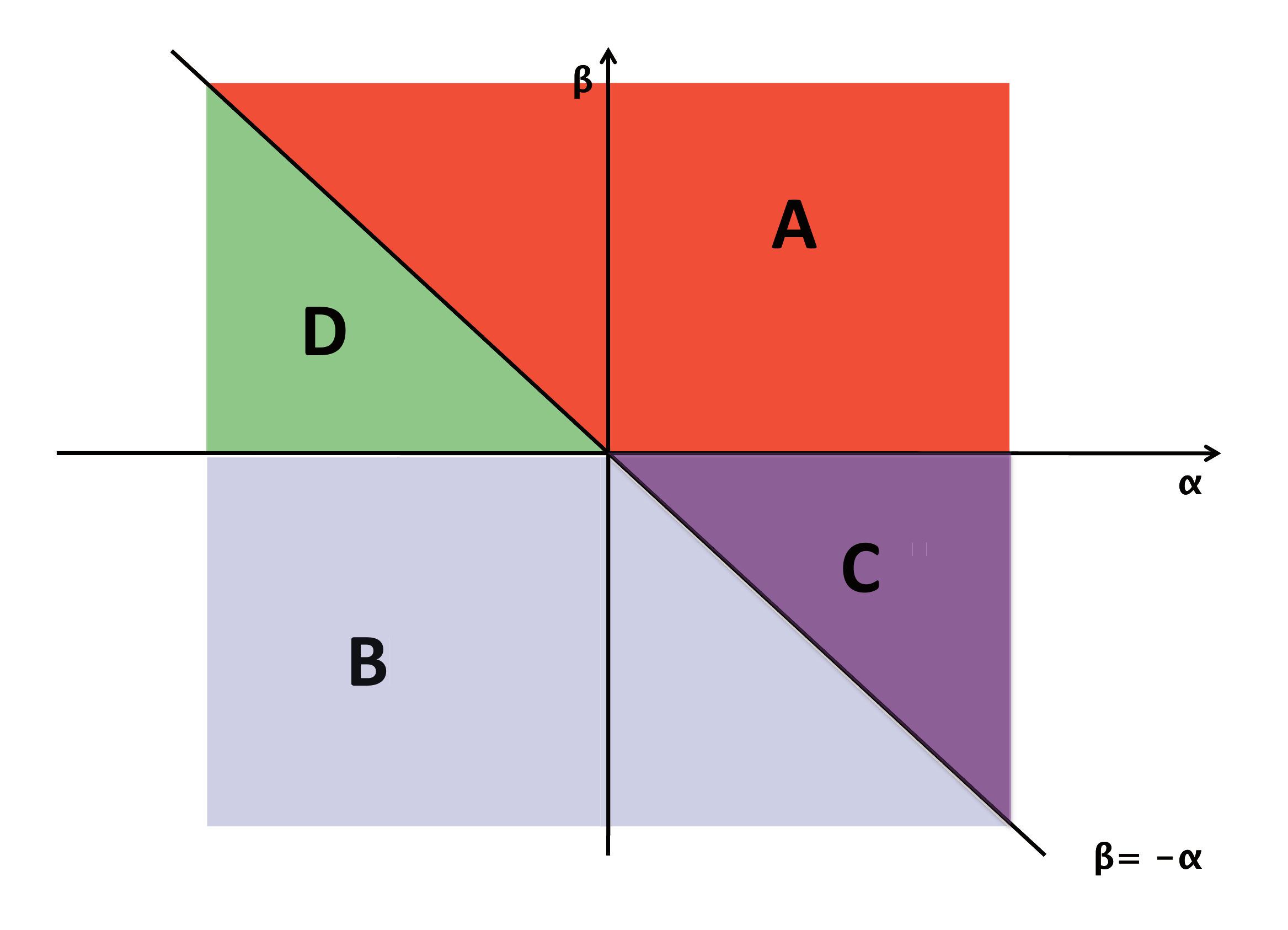,scale=0.4} \caption{The quantity
$-\dfrac{\beta}{\alpha}\notin (0,1) \Longleftrightarrow
(\alpha,\beta) \in A\cup B.$ In $A$ we have
$\beta>\min(0,-\alpha)$; in $B$ we have $\beta < \min(0,-\alpha)$.
The quantity $-\dfrac{\beta}{\alpha}\in (0,1) \Longleftrightarrow
(\alpha,\beta) \in C\cup D.$ In $C$ we have $\alpha >0$, $-\alpha
<\beta <0$; in $D$ we have $\alpha<0$, $0<\beta <-\alpha$.}
\label{fig:ABCD}
\end{figure}
%\newpage
Let us describe in detail the different situations:\\
\begin{description}
\item[$(\alpha,\beta)\in A$] this means that $\beta>\min(0,-\alpha)$. If $\alpha \geq 0$ then $0<\beta\leq \alpha M_1(f)+\beta \leq \alpha+\beta$; if $\alpha < 0$ then $0<\alpha+\beta\leq \alpha M_1(f)+\beta\leq \beta$. In any case we have $\alpha M_1(f)+\beta\geq 0$ and so $M'_1(f)=(\alpha M_1(f)+\beta)(M_2(f)-M_1^2(f))\geq 0.$ As shown in \figurename~\ref{fig:AB} (on the left), $M_1(f)$ is increasing in time and is limited on the right by the curve $\tilde{M}(t)\longrightarrow 1$ with $\tilde{M}'(t)=(\alpha \tilde{M}+\beta)\tilde{M}(1-\tilde{M}).$ \\
\item[$(\alpha,\beta)\in B$] this means that $\beta<\min(0,-\alpha)$. If $\alpha \geq 0$ then $\beta\leq \alpha M_1(f)+\beta \leq \alpha+\beta<0$; if $\alpha < 0$ then $\alpha+\beta\leq \alpha M_1(f)+\beta\leq \beta<0$. In any case we have $\alpha M_1(f)+\beta\leq 0$ and so $M'_1(f)=(\alpha M_1(f)+\beta)(M_2(f)-M_1^2(f))\leq 0.$ As shown in \figurename~\ref{fig:AB} (on the right), $M_1(f)$ is decreasing in time and is limited on the left by the curve $\tilde{M}(t)\longrightarrow 0$ with $\tilde{M}'(t)=(\alpha \tilde{M}+\beta)\tilde{M}(1-\tilde{M}).$
\begin{figure}
\subfigure {\epsfig{file=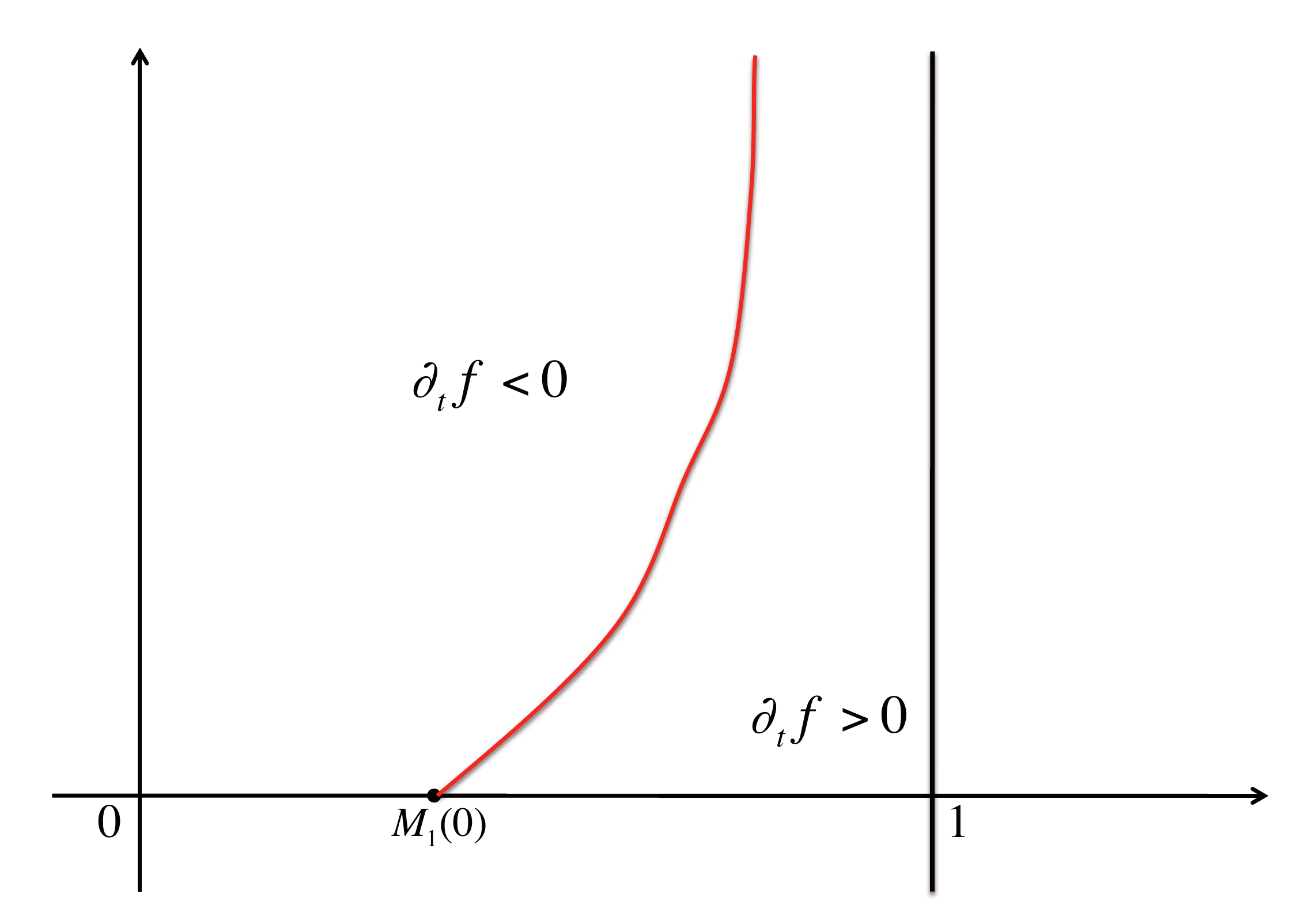,height=3.2cm,width=5.9cm}}
\hspace{5mm} \subfigure
{\epsfig{file=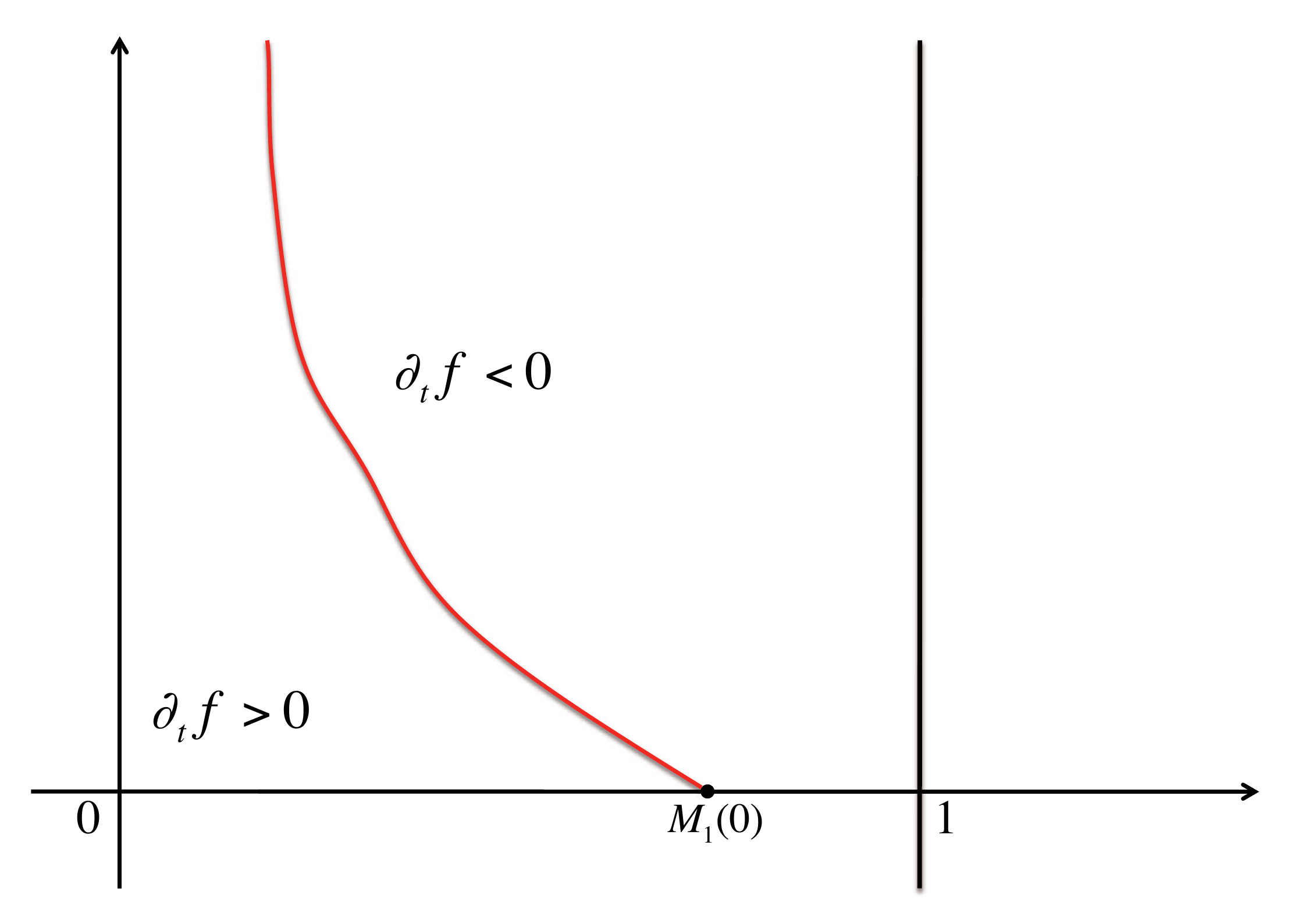,height=3.2cm,width=5.9cm}} \caption{On
the left: evolution of $M_1(t)$ in the case $(\alpha, \beta)\in
A$. On the right: evolution of $M_1(t)$ in the case $(\alpha,
\beta)\in B$.}\label{fig:AB}
\end{figure}
\end{description}
\vspace{0.2cm}
\item[Case b] $-\dfrac{\beta}{\alpha} \in (0,1)$. \figurename~\ref{fig:ABCD} shows that it is possible if and only if $(\alpha, \beta) \in C\,\cup D$.\\
\begin{description}
\item[$(\alpha,\beta)\in C$] \figurename~\ref{fig:CD} (on the left) shows two different situations: $$M_1(0)>-\dfrac{\beta}{\alpha}\quad \Longrightarrow \quad M'_1(f)>0;$$
$$M_1(0)<-\dfrac{\beta}{\alpha}\quad \Longrightarrow \quad M'_1(f)<0.$$
By contrast with the previous case, the behavior changes according to the value of the first moment $M_1$ at initial time $t=0$. If $M_1(0)>-\frac{\beta}{\alpha}$ then $M_1(t)$ increases in time and is limited on the left by the curve $\tilde{M}(t)\longrightarrow 1$ with $\tilde{M}'(t)=(\alpha \tilde{M}+\beta)\tilde{M}(1-\tilde{M}).$ Conversely, if $M_1(0)<-\frac{\beta}{\alpha}$ then $M_1(t)$ decreases in time and is limited on the right by the curve $\tilde{M}(t)\longrightarrow 0$ with $\tilde{M}'(t)=(\alpha \tilde{M}+\beta)\tilde{M}(1-\tilde{M}).$\\

\item[$(\alpha,\beta)\in D$] \figurename~\ref{fig:CD} (on the right) shows the  two situations: $$M_1(0)>-\dfrac{\beta}{\alpha}\quad \Longrightarrow \quad M'_1(f)<0;$$
$$M_1(0)<-\dfrac{\beta}{\alpha}\quad \Longrightarrow \quad M'_1(f)>0.$$
Also in this case, the behavior depends on the value of $M_1(0)$:
if $M_1(0)>-\frac{\beta}{\alpha}$ then $M_1(t)$ decreases in time
away from the value $-\frac{\beta}{\alpha}$; if
$M_1(0)<-\frac{\beta}{\alpha}$ then $M_1(t)$ increases in time
toward the value $-\frac{\beta}{\alpha}$. In any cases, the value
$-\frac{\beta}{\alpha}$ is the one that dominates in time.
\begin{figure}[htp]
\subfigure
{\epsfig{file=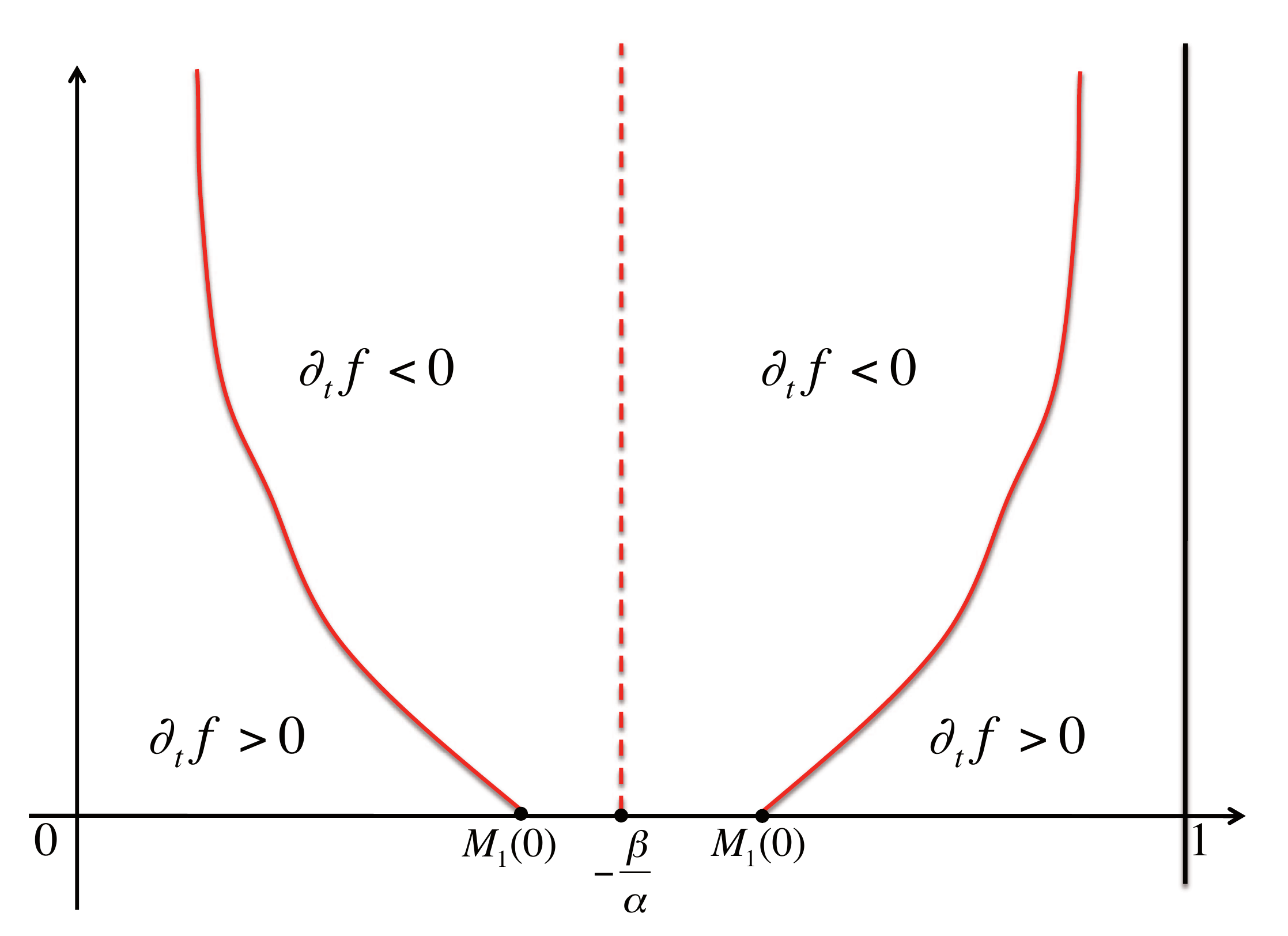,height=3.2cm,width=5.9cm}}
\hspace{5mm}
\subfigure
{\epsfig{file=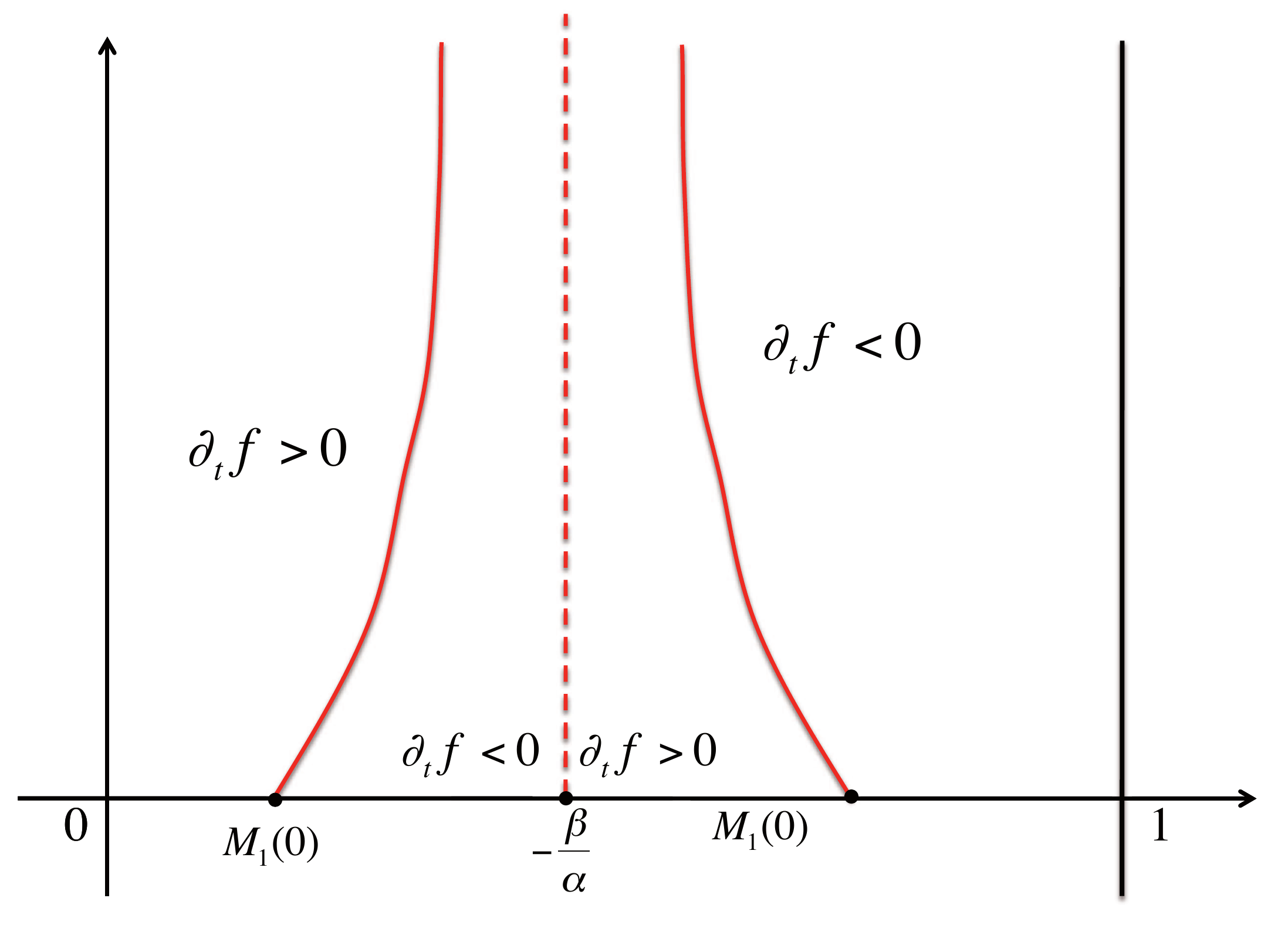,height=3.2cm,width=5.9cm}}
\caption{On the left: evolution of $M_1(t)$ in the case $(\alpha, \beta)\in C$. On the right: evolution of $M_1(t)$ in the case $(\alpha, \beta)\in D$.}\label{fig:CD}
\end{figure}
\end{description}
\end{description}
From the behavior described is easy to understand what happens in
the population. If  we are in region $A$, there is dominance of
the first of the two pure strategies that describe the game,
because the dynamic encourages the state $p=1$ which corresponds
to the first pure strategy. This means that in the population,
those who adopt the first pure strategy survive, the others do
not. In $B$ we have the opposite situation: there is the dominance
of the second pure strategy and so those who adopt the first pure
strategy or any other mixed strategy, do not survive. The third
region $C$ is such that there is not a mixed strategy that
dominates, but a priori we can not say which of the two pure
strategies dominates, it all depends on the value of $M_1(0)$. If
$M_1(0)>-\frac{\beta}{\alpha}$ then there is the dominance of the
first pure strategy, if $M_1(0)<-\frac{\beta}{\alpha}$ then there
is the dominance of the second pure strategy. In $D$ we have a
different situation than in the previous cases: here there is
coexistence between the two pure strategies and so between the two
populations.
\subsection{Stationary solutions} \label{stasol}
From the study of the asymptotic behavior, we expect that for
$t\rightarrow \infty$ the solution of the model (\ref{eq:model})
tends to a stationary solution. We can find two classes of
stationary solutions:
\begin{description}
\item[Type I]
If we are in case b, namely $-\frac{\beta}{\alpha}\in (0,1)$, then
a stationary solution is given by every density function
$\overline f(p)$ such that
\begin{equation}\label{M1case1}
M_1(\overline f)=-\frac{\beta}{\alpha}.
\end{equation}
\item[Type II]
If we consider more general solutions, so that the probability
distributions are no more absolutely continuous with respect to
the Lebesgue measure, we can say that another class of stationary
solutions is given by concentrated Dirac masses, i.e.:
\[  f_{\overline p}(p)=\delta (p=\overline p).\]
In the following we are going to deal with these generalized solutions in a quite informal way.
More rigorous arguments will be given in a future paper.\\

Here we want just remark that formally, since $M_1(f_{\overline p})=\overline p$, we have
\[
f_{\overline p}(p)\left[(\alpha M_1(f_{\overline p})+\beta)(p-M_1(f_{\overline p}))\right]=\delta (p=\overline p)(\alpha\overline p+\beta)(p-\overline p)=0.\]
\end{description}
\subsubsection{Linear stability of stationary solutions}
This Subsection is dedicated to the study of the linear stability of stationary solutions. Denote by
\[ Q(f) =f(p)\left[(\alpha M_1(f)+\beta)(p-M_1(f))\right]\]
the integral operator associated to the replicator equation. Let
$\tilde f$ be a generalized stationary state. We linearise the
operator around the state  $\tilde f$. So for every perturbation
$g$, with $\int_0^1 g(p) dp=0$, we have the linear operator
\[
\begin{array}{ll}
Q'(\tilde f)(g)&=\lim_{h\to0}\frac{1}{h}\left(Q(\tilde f+hg) -Q(\tilde f)\right)\\ \\
&=\left[\tilde f M_1(g)+g(\alpha M_1(\tilde f)+\beta)\right](p-M_1(\tilde f)). \end{array}\]
\begin{description}
\item[Type I] we have $-1<\dfrac{\beta}{\alpha}<0$. Using (\ref{M1case1}) we obtain that the linearized equation for a perturbation $g$ is given by
\begin{equation}\label{linear1}
\partial_t g(p)=Q'(\overline f)(g)=\overline f(p) (p-M_1(\overline f))M_1(g).
\end{equation}
If $\int_0^1 g_0(p) dp=0$, the same is true for $g$ for $t>0$. \\
\begin{proposition}\label{prop:nostable}
Assume $-\dfrac{\beta}{\alpha}\in(0,1)$. Then, there is no
continuous stationary solution  $\bar{f}(p)$  to  problem
(\ref{eq:model}) which is linearly stable.
\end{proposition}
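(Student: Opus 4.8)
The plan is to show instability by reducing the linearized dynamics to a single scalar ODE for $M_1(g)$ and exhibiting an explicitly growing mode. The key observation is that in the linearized equation (\ref{linear1}), written as
\[
\partial_t g(t,p)=\overline f(p)\,(p-c)\,M_1(g)(t),\qquad c:=M_1(\overline f)=-\frac{\beta}{\alpha},
\]
the spatial profile $\overline f(p)(p-c)$ is fixed and the only dynamical coupling is through the scalar quantity $M_1(g)$. First I would multiply by $p$ and integrate over $[0,1]$ to obtain a closed equation for $M_1(g)$, and then integrate the resulting exponential back into the full profile $g(t,p)$.

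Carrying out the first step, differentiating under the integral sign gives
\[
\frac{d}{dt}M_1(g)(t)=\int_0^1 p\,\partial_t g\,dp
=M_1(g)(t)\int_0^1 p(p-c)\,\overline f(p)\,dp
=\lambda\,M_1(g)(t),
\]
where $\lambda:=M_2(\overline f)-c\,M_1(\overline f)=M_2(\overline f)-M_1(\overline f)^2$, since $c=M_1(\overline f)$. Hence $M_1(g)(t)=M_1(g_0)\,e^{\lambda t}$, and integrating (\ref{linear1}) in time yields the closed form
\[
g(t,p)=g_0(p)+\overline f(p)\,(p-c)\,M_1(g_0)\,\frac{e^{\lambda t}-1}{\lambda},
\]
which one can check is self-consistent by recomputing $M_1(g)(t)$. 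The decisive point is the sign of $\lambda$: it is exactly the variance of the distribution $\overline f$, and by the strict Jensen inequality, sharpening (\ref{eq:M1dis}), a genuinely continuous density (which by Lemma \ref{momenti} is not a point mass) satisfies $M_2(\overline f)>M_1(\overline f)^2$, so $\lambda>0$.

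To conclude, I would choose an admissible perturbation $g_0$ with zero total mass, $\int_0^1 g_0\,dp=0$, but with $M_1(g_0)=\int_0^1 p\,g_0(p)\,dp\neq0$ (for instance a multiple of $p-\tfrac12$, suitably adjusted); such a $g_0$ clearly exists. For this datum $M_1(g)(t)$, and therefore $g(t,p)$ at every point where $\overline f(p)(p-c)\neq0$, grows like $e^{\lambda t}$, so the perturbation is amplified without bound and $\overline f$ cannot be linearly stable. Since $\overline f$ was an arbitrary continuous Type~I stationary solution, no such stable solution exists. I expect the main obstacle to be the careful justification of the strict inequality $\lambda>0$: one must rule out the degenerate case and confirm that a continuous density has strictly positive variance, which is where Lemma \ref{momenti} and the strict convexity of $p\mapsto p^2$ are used; the remaining computations are routine.
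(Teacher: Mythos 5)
Your proof is correct and takes essentially the same route as the paper: both reduce the linearized dynamics to the scalar ODE $\frac{d}{dt}M_1(g)=\bigl(M_2(\overline f)-M_1(\overline f)^2\bigr)M_1(g)$ and conclude from the Jensen inequality together with the fact that equality (zero variance) would force $\overline f$ to be a Dirac mass, impossible for a continuous density. The only difference is presentational: the paper argues contrapositively (linear stability would require $N_1(\overline f)\leq 0$, hence $N_1=0$, hence a Dirac mass), while you establish $N_1(\overline f)>0$ directly and exhibit an explicitly growing perturbation.
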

\begin{proof}
To prove the result it is enough to establish the following equality
\begin{equation}\label{eq:stability}
M_2(\bar{f})=(M_1(\bar{f}))^2,
\end{equation}
which implies that the variance of the measure $\bar{f} dp$
vanishes and so the measure has to be a Dirac mass. We take a
continuous stationary solution of Type I, namely a positive
function $\overline f$ such that its total mass is equal to 1 and
$M_1(\overline f)=-\frac{\beta}{\alpha}$. We perturb this state by
a function $g$ of zero mass. Computing the first moment of the
perturbation $g$ yields
\[M'_1(g)=M_1(g)(M_2(\overline f)-(M_ 1(\overline f))^2).\]
Setting
$$N_1(\bar{f}):=M_2(\overline f)-(M_1(\overline f))^2=\int_0^1 p(p+\frac{\beta}{\alpha}) {\overline f(p)}dp,$$
we obtain:
\[M_1(t,g)=M_1(0,g)e^{tN_1}.\]
Moreover, setting $N_k(\overline f):=M_{k+1}(\overline
f)-M_{k}(\overline f)M_1(\overline f),$ we have
$$M_k(t,g)=M_k(0,g)+\frac{N_k(\overline f)
M_1(0,g)}{N_1(\bar{f})}(e^{tN_1(\bar{f})}-1).$$ This means that
the condition for linear stability is just:
$$N_1(\overline f)=\int_0^1 p(p+\frac{\beta}{\alpha}) \overline f(p) dp\leq 0.$$
This inequality can be verified only when the equality condition is satisfied, since we already know from (\ref{eq:M1dis}) that $N_1(\bar{f})\geq 0$.
\end{proof}
\item[Type II]
For the concentrated Dirac masses $f_{\overline p}(p)=\delta
(p=\overline p)$ we have that  the linearized equation for a
perturbation $g$ is given by
\begin{equation}\label{linear2}
\partial_t g(p)=Q'(f_{\overline p})(g)= g(p)(\alpha \overline p+\beta)(p-\overline p) .
\end{equation}
\begin{proposition}
The Dirac mass solutions are linear stable if, on the support of $g(p)$, we have:
\[(\alpha \overline p+\beta)(p-\overline p) \leq 0,\qquad \forall\, p\in[0,1].\]
For general perturbations, i.e.: with $supp\  g\equiv[0,1]$ we have three cases:
\begin{enumerate}
\item the Dirac mass concentrated in $\overline p=0$ is stable if $\beta<0$, which means in the original constants, $b<d$.
\item the Dirac mass concentrated in $\overline p=1$ is stable if $\alpha +\beta>0$, which means in the original constants, $a>c$.
\item if $-1<\frac{\beta}{\alpha}<0$, then the Dirac mass concentrated in  $\overline p=-\frac{\beta}{\alpha}$ is stable.
\end{enumerate}
\end{proposition}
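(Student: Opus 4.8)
The plan is to exploit the fact that the linearised equation (\ref{linear2}) is, for each \emph{fixed} $p$, a scalar linear ordinary differential equation in $t$ whose coefficient $(\alpha\overline p+\beta)(p-\overline p)$ is independent of time. First I would integrate it explicitly to get
\[
g(t,p)=g_0(p)\,\exp\!\big(t\,(\alpha\overline p+\beta)(p-\overline p)\big).
\]
Writing $c(p):=(\alpha\overline p+\beta)(p-\overline p)$, this closed form shows that the perturbation stays bounded wherever $c(p)\leq 0$ and grows exponentially wherever $c(p)>0$. Since $g(t,p)$ vanishes wherever $g_0(p)$ does, only the support of the perturbation enters, so the Dirac state $f_{\overline p}$ is linearly stable precisely when $c(p)\leq 0$ for every $p\in\mathrm{supp}\,g$, which is exactly the stated condition.

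For the three listed cases I would specialise this sign condition to $\mathrm{supp}\,g\equiv[0,1]$ and read off the constraint on $(\alpha,\beta)$. When $\overline p=0$ the factor $p-\overline p=p$ is nonnegative on $[0,1]$, so $c(p)=\beta\,p\leq 0$ for all $p$ iff $\beta\leq 0$; via $\beta=b-d$ from (\ref{eq:alphabeta}) this reads $b<d$. When $\overline p=1$ the factor $p-\overline p=p-1$ is nonpositive, so one needs $\alpha+\beta\geq 0$, and since $\alpha+\beta=a-c$ this reads $a>c$. Finally, when $\overline p=-\beta/\alpha\in(0,1)$ the first factor $\alpha\overline p+\beta$ vanishes identically, hence $c(p)\equiv 0$ and $g(t,p)\equiv g_0(p)$: the perturbation is frozen, giving (neutral) linear stability.

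There is no real analytical obstacle here beyond fixing the meaning of stability in this generalised Dirac setting, since the linearised flow is diagonal in $p$ with the time-independent "eigenvalue" $c(p)$ and the conclusion follows by inspecting its sign. The only mildly delicate point is the strict-versus-nonstrict boundary: the explicit exponential shows that $c(p)<0$ yields decay, $c(p)=0$ a neutrally stable frozen mode, and $c(p)>0$ growth, so I would phrase stability as boundedness of $g$. This is consistent with quoting the strict signs $\beta<0$ and $\alpha+\beta>0$ for the pure-strategy masses $\overline p\in\{0,1\}$ — where they guarantee $c(p)<0$ on all but a negligible set — and with accepting the borderline neutral case $\overline p=-\beta/\alpha$, where no strictly contracting direction can exist.
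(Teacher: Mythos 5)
Your proof is correct and takes essentially the same route the paper intends: the proposition is stated right after the linearised equation (\ref{linear2}), and the implicit justification is precisely your pointwise integration $g(t,p)=g_0(p)\,e^{t\,c(p)}$ with $c(p)=(\alpha\overline p+\beta)(p-\overline p)$, followed by inspection of the sign of $c(p)$ on $\mathrm{supp}\,g$. Your case-by-case checks ($c(p)=\beta p$ for $\overline p=0$, $c(p)=(\alpha+\beta)(p-1)$ for $\overline p=1$, and $c\equiv 0$ at $\overline p=-\beta/\alpha$, hence neutral stability) and the translations $\beta=b-d$, $\alpha+\beta=a-c$ all match the stated conclusions.
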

\end{description}

\section{Numerical approximation of the 1D model}
We want to perform some numerical simulations with model (\ref{eq:model}). We consider some nodes
\begin{equation*}
p_i\in [0,1] \qquad i=0,\dots, I,\quad (p_0=0,\,p_I=1),
\end{equation*}
and, using the trapezoidal rule, we obtain the
following quadrature formula for the first moment $M_1(f)$ (for all
times $t\in [0,T]$)
\begin{equation}\label{eq:discretize}
M_1(f)=\int_0^1 pf(t,p)dp\approx \tilde{M_1}(f):=\displaystyle \sum_{i=0}^{I-1} \dfrac{p_{i+1}-p_i}{2}\,\left[p_{i+1}f(t,p_{i+1})+p_if(t,p_i)\right].
\end{equation}
Clearly the above discretization is such that conservation of mass
holds
\begin{eqnarray*}
&&\frac{d}{dt} \sum_{i=0}^{I-1} \dfrac{p_{i+1}-p_i}{2}\,\left[f(t,p_{i+1})+f(t,p_i)\right] =\\ &&(\alpha \tilde{M_1}(f)+\beta)\sum_{i=0}^{I-1} \dfrac{p_{i+1}-p_i}{2}\left[(p_{i+1}-\tilde{M_1}(f))f(t,p_{i+1})+(p_i-\tilde{M_1}(f))f(t,p_i)\right]=0,
\end{eqnarray*}
provided that initially
\[
\sum_{i=0}^{I-1} \dfrac{p_{i+1}-p_i}{2}\,\left[f_0(p_{i+1})+f_0(p_i)\right]=1.
\]
In the case of equally spaced points, $\Delta p= p_{i+1}-p_i$, the above property implies that
\[
f(t,p_i)\leq \frac1{\Delta p},\quad \forall\, t>0,\quad i=0,\ldots, I,
\]
and thus the numerical solution is well-defined even when we approach a Dirac delta at the
continuous level. More precisely both possible steady states are preserved by the numerical method,
namely $\tilde{M_1}(\bar{f})=-{\beta}/{\alpha}$ and the discrete Dirac delta defined as
\[
f_{\bar p_i}(p_j)=\left\{
                         \begin{array}{ll}
                           0, & {i\neq j} \\
                           \displaystyle\frac1{\Delta p}, & {i=j,\,i=1,\ldots,I-1}\\[+.3cm]
                           \displaystyle\frac2{\Delta p}, & {i=j,\,i=0,I}.
                         \end{array}
                       \right.
\]
For the time discretization we simply use a fourth order Runge-Kutta
method with constant time stepping $\Delta t$ on the interval $[0,T]$. Nonnegativity of
the numerical solution is achieved taking
\[
\Delta t \leq |\alpha \tilde{M_1}(f^n)+\beta|^{-1},
\]
where $f^n(p_i)=f(t^n,p_i)$, $t^n=n\Delta t$.

\subsection{Numerical tests: Prisoner's Dilemma game}
One interesting example of a game is given by the so-called
Prisoner's dilemma game in which there are two players and two
possible strategies. The players have two options, cooperate or
defect. The payoff matrix is the following
\begin{eqnarray*}
C \qquad D \quad\,\,\, &\\
\\[-.3cm]
\mathcal{A}=\left(\begin{array}{ccc}
R & & S \\ T & & P
\end{array}\right).&\begin{array}{c} C \\[+.0cm] D \end{array}
\end{eqnarray*}
If both players cooperate both obtain $R$ fitness units (reward
payoff); if both defect, each receives $P$ (punishment payoff); if
one player cooperates and the other defects, the cooperator gets
$S$ (sucker's payoff)  while the defector gets $T$ (temptation
payoff). The payoff values are ranked $T>R>P>S$ and $2R>T+S$. From
the game theory we know that cooperators are always dominated by
defectors. One of the main problems has been about the possibility
of success for cooperation, which is impossible in the pure
strategies models: the replicator dynamics of prisoner's dilemma,
\cite{HofbauerSigmund}, shows that cooperators are extinguished.

For the numerical tests we fix the following normalized payoff matrix:
\begin{equation}\label{PDmatrix}
\acal = \left(\begin{matrix} 1 & 0 \cr b & \varepsilon\end{matrix}\right),
\end{equation}
with $b=1.1$ and $\varepsilon=0.001$. In this case we have
$\alpha=1-b+\varepsilon<0$ and $\beta=-\varepsilon <0$ and so
$\frac{\beta}{\alpha}>0$. This means that stationary solutions are
expected to be given by concentrated Dirac masses (see Section
\ref{stasol}). For general perturbation we have that $\bar{p} = 0$
is linearly stable.

\subsubsection{Test n.1}
We consider the initial datum
\begin{equation}
f_0(p)=1 \qquad \forall\, p\in [0,1]. \label{eq:dato1}
\end{equation}
\begin{figure}[htp]
\center \epsfig{file=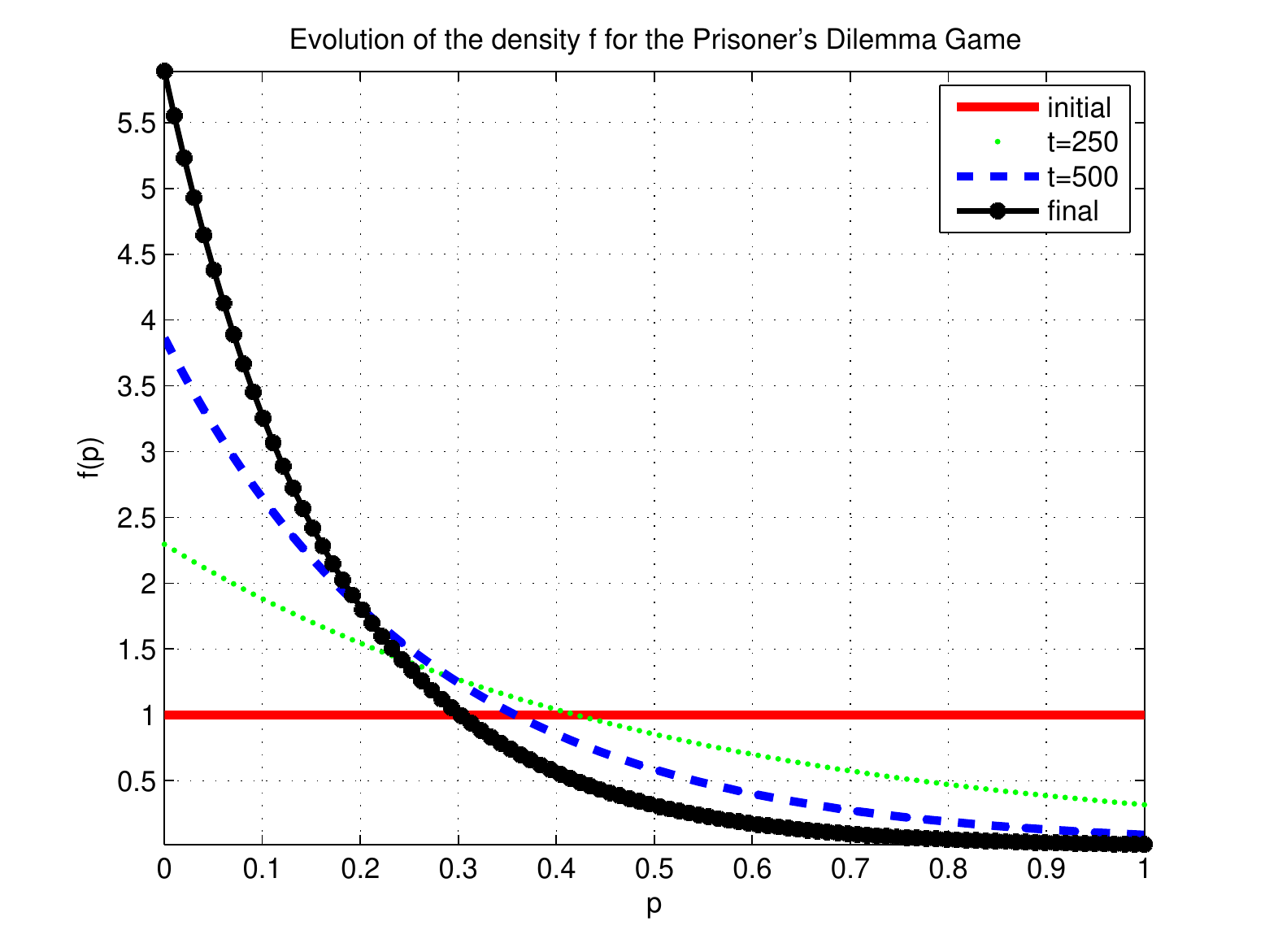,height=6cm,width=11cm}
\caption{Prisoner's Dilemma Game, test n.1: $b=1.1$,
$\varepsilon=0.001$. Plot of the evolution over time of $f(t,p)$
for the Cauchy problem (\ref{eq:model}) with $f_0(p)=1$ for
$T=1000$.} \label{fig:test1}
\end{figure}
%\newpage
\figurename~\ref{fig:test1} shows that the density $f$ tends to concentrate at the point $p=0$, according to what we expected. \\\\
We have studied, numerically, the $L^{\infty}$-norm of the
solution $f$. Using the a priori estimate (\ref{eq:festimate}), we
know that $$||f||_{\infty}\leq
||f_0(p)||_{\infty}e^{(|\alpha|+|\beta|)t}=e^{(|\alpha|+|\beta|)t},$$
and this yields
\begin{equation}\label{stimaplot}
H(t):=Log(||f||_{\infty})\leq (|\alpha|+|\beta|)e^{Log(t)}=:E(t), \qquad \forall\, t.
\end{equation}
\figurename~\ref{fig:test1b} shows that inequality (\ref{stimaplot}) is respected.
\begin{figure}[htp]
\center \epsfig{file=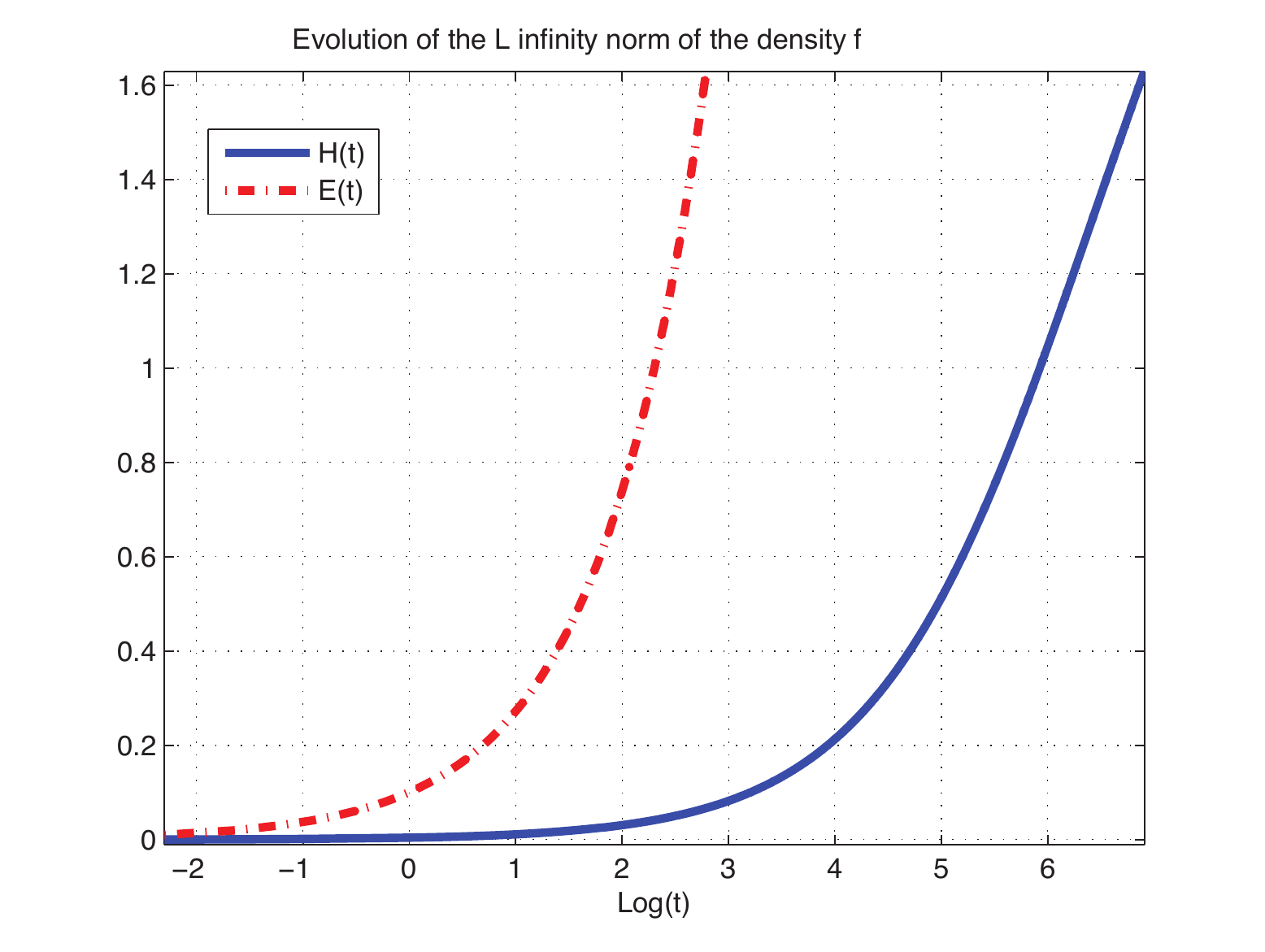,height=6cm,width=11cm}
\caption{Prisoner's Dilemma Game, test n.1: we have $Log(t)$ on
the x-axes and $H(t)$ and $E(t)$ on the y-axes for the Cauchy
problem (\ref{eq:model}) with $f_0(p)=1$ for $T=1000$.}
\label{fig:test1b}
\end{figure}
\begin{figure}[htp]
\center \epsfig{file=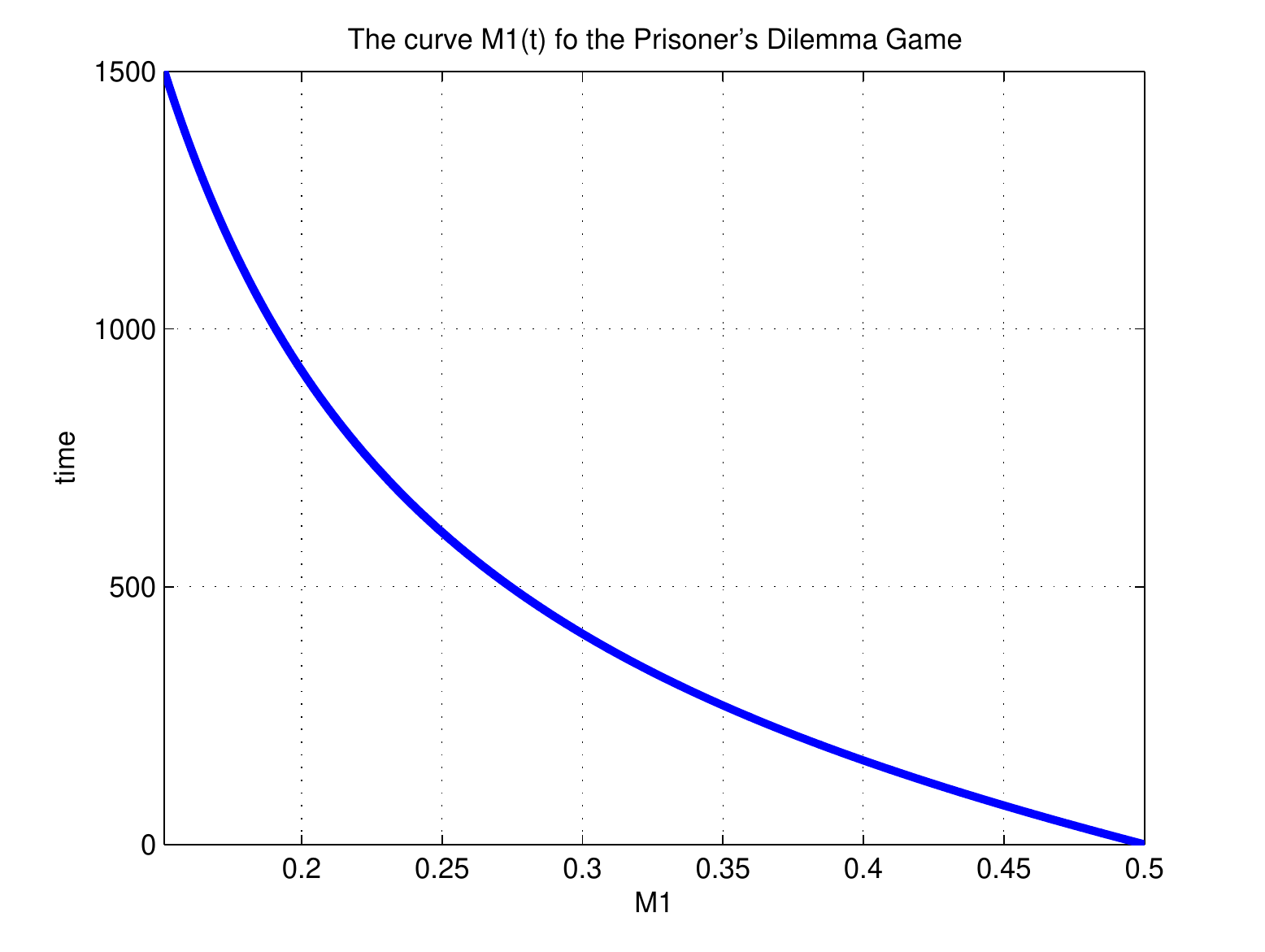,height=6cm,width=11cm}
\caption{Prisoner's Dilemma Game, test n.1: $b=1.1$,
$\varepsilon=0.001$. Plot of the evolution of $M_1(t)$ vs. time
$t$ for the Cauchy problem (\ref{eq:model}) with $f_0(p)=1$ for
$T=1500$.} \label{fig:test1c}
\end{figure}
%\newpage
In the prisoner's dilemma game we have that $(\alpha, \beta)\in B$
(see \figurename~\ref{fig:ABCD}) and we know, from game theory,
that the defectors' pure strategy dominates the cooperators' pure
strategy.
The evolution in time of $M_1(f)$ (\figurename~\ref{fig:test1c}) is as expected (see  \figurename~\ref{fig:AB} (on the right)). \\

We consider now a quadratic initial datum for the model
(\ref{eq:model}). We have plotted the numerical results in
\figurename~\ref{fig:test1e}. As in the previous case with
$f_0(p)=1$, we see that the density $f$ tends to concentrated at
the point $p=0$ that corresponds to the defectors' strategy.

\begin{figure}[htp]
\center \epsfig{file=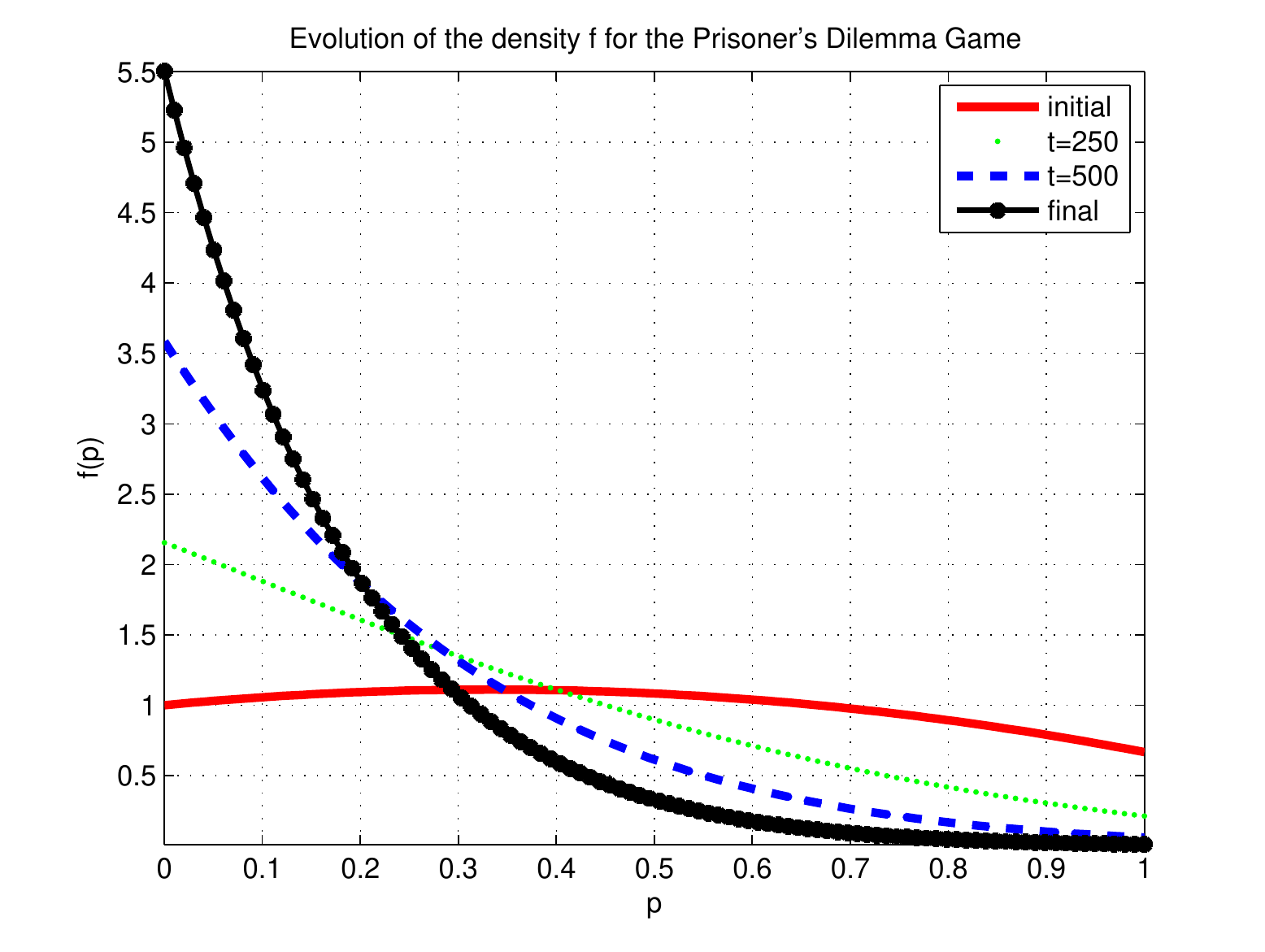,height=6cm,width=11cm}
\caption{Prisoner's Dilemma Game, test n.1: $b=1.1$,
$\varepsilon=0.001$. Plot of the evolution over time of $f(t,p)$
for the Cauchy problem related to the model (\ref{eq:model}) with
initial datum $f_0(p)=-p^2+\frac{2}{3}p+1 \quad \forall\, p\in
[0,1]$ for $T=1000$.} \label{fig:test1e}
\end{figure}
\subsubsection{Test n.2}
Now we want to consider an initial datum $f_0(p)$ with compact
support in $[p_1,p_2]\subset [0,1]$, with $p_1<p_2$. If we define
\begin{equation}
q:=\dfrac{p-p_1}{p_2-p_1}
\end{equation}
we have that $f_0(q)$ has compact support in $[0,1]$. W.r.t. $q$ the average payoff (\ref{mix_payoff}) has the following form
\begin{equation}
A(q,q^{*})=\bar{\alpha}qq^{*}+\bar{\beta}q+\bar{\gamma}q^{*}+\bar{\delta}, \label{mix_payoff_q}
\end{equation}
with
\begin{equation} \label{eq:const_q}
\bar{\alpha}:=\alpha (p_2-p_1)^2, \quad \bar{\beta}:=\alpha (p_2-p_1)p_1+\beta (p_2-p_1),
\end{equation}
\begin{equation}\label{eq:const_q1}
\bar{\gamma}:=\alpha (p_2-p_1)p_1+\gamma (p_2-p_1), \quad \bar{\delta}:=\alpha p_1^2+(\gamma-\beta-2\delta)p_1+\delta.
\end{equation}
The quantity
$$\dfrac{\bar{\beta}}{\bar{\alpha}}=\dfrac{1}{p_2-p_1}\left(p_1+\dfrac{\beta}{\alpha}\right)$$
is positive if $\frac{\beta}{\alpha}>0,$ as in the Prisoner's
Dilemma game. This means that the point $\bar{p}=p_1$
(corresponding to the point $\bar{q}=0$) is stable, as shown in
\figurename~\ref{fig:test2} that is related to the Cauchy problem
(\ref{eq:model}) with the following initial datum:
\begin{equation}\label{eq:dato4}
f_0(p)=\begin{cases}
2 \qquad p\in \left[\frac{1}{4},\frac{1}{2}\right]\cup \left[\frac{3}{4},1\right] \\
0 \qquad \qquad \mbox{elsewhere.}
\end{cases}
\end{equation}
\begin{figure}[htp]
\center \epsfig{file=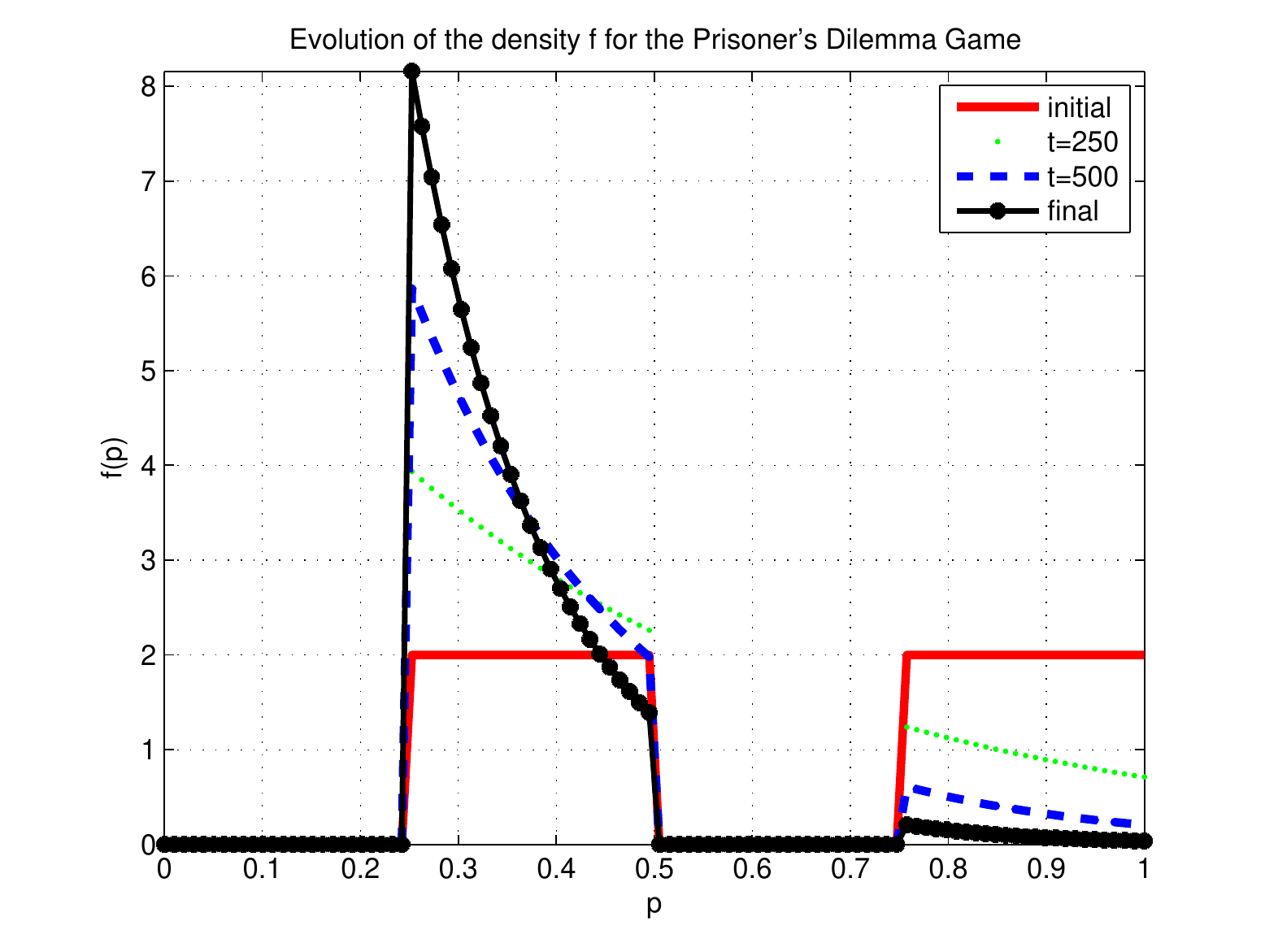,height=6cm,width=11cm}
\caption{Prisoner's Dilemma Game, test n.2: $b=1.1$,
$\varepsilon=0.001$. Plot of the evolution over time of $f(t,p)$
for the Cauchy problem (\ref{eq:model}) with initial datum
(\ref{eq:dato4}) for $T=1000$.} \label{fig:test2}
\end{figure}

\subsection{Numerical tests: Hawk or Dove Game}
Another example of a game is given by the so-called Hawk or Dove
game in which there are two pure strategies: hawks (H) and doves
(D). While hawks escalate fights, doves retreat when the opponent
escalates. The benefit of winning the fight is $b$. The cost of
injury is $c$. If two hawks meet, then the expected payoff for
each of them is $\frac{b-c}{2}$. The fight will escalate. One hawk
wins, while the other is injured. Since both hawks are equally
strong, the probability of winning or losing is
$\frac{1}{2}$. If a hawk meets a dove, the hawk wins and receives payoff $b$,
while the dove retreats and receives payoff $0$. If two doves meet, there will be no injury. One of them eventually wins.
The expected payoff is $\frac{b}{2}$. Thus the payoff matrix is given by
\begin{eqnarray*}
H \qquad D \qquad &\\
\\[-.3cm]
\mathcal{A}=\left(\begin{array}{ccc}
\dfrac{b-c}{2} & & b \\ 0 & & \dfrac{b}{2}
\end{array}\right).&\begin{array}{c} H \\[+.3cm] D \end{array}
\end{eqnarray*}
If $b<c$, then neither pure strategy is a Nash equilibrium. If everybody adopts the first pure strategy (H),
it is best to adopt the second pure strategy (D) and vice versa. This means that hawks and doves can coexist.
Selection dynamics will lead to a mixed population. \\

We fix $b=1$ and look for a suitable value of  $c>b$ for the numerical tests. We obtain the matrix
\begin{equation*}
\mathcal{A}=
\left(\begin{array}{ccc}
\dfrac{1-c}{2} & & 1 \\ 0 & & \dfrac{1}{2}
\end{array}\right).
\end{equation*}
In this case we have $-\dfrac{\beta}{\alpha}=\dfrac{1}{c}$ and $(\alpha, \beta)\in D$ if $c>1$ (see \figurename~\ref{fig:ABCD}). The function
\begin{equation}
f_0(p)=-p^2+ \theta p+1 \label{HD:dato2}
\end{equation}
is an admissible stationary solution for the problem, namely positive, with a total mass equal to 1, and with the first momentum equal to $-\dfrac{\beta}{\alpha}$, if and only if $\theta=\frac 2 3$ and $c=\frac{36}{17}$. In
\figurename~\ref{fig:testHD1} we show the numerical results for
the Cauchy problem (\ref{eq:model}), associated to this initial
datum. We remark that the numerical scheme
preserves the stationary solution.

\begin{figure}[htp]
\center \epsfig{file=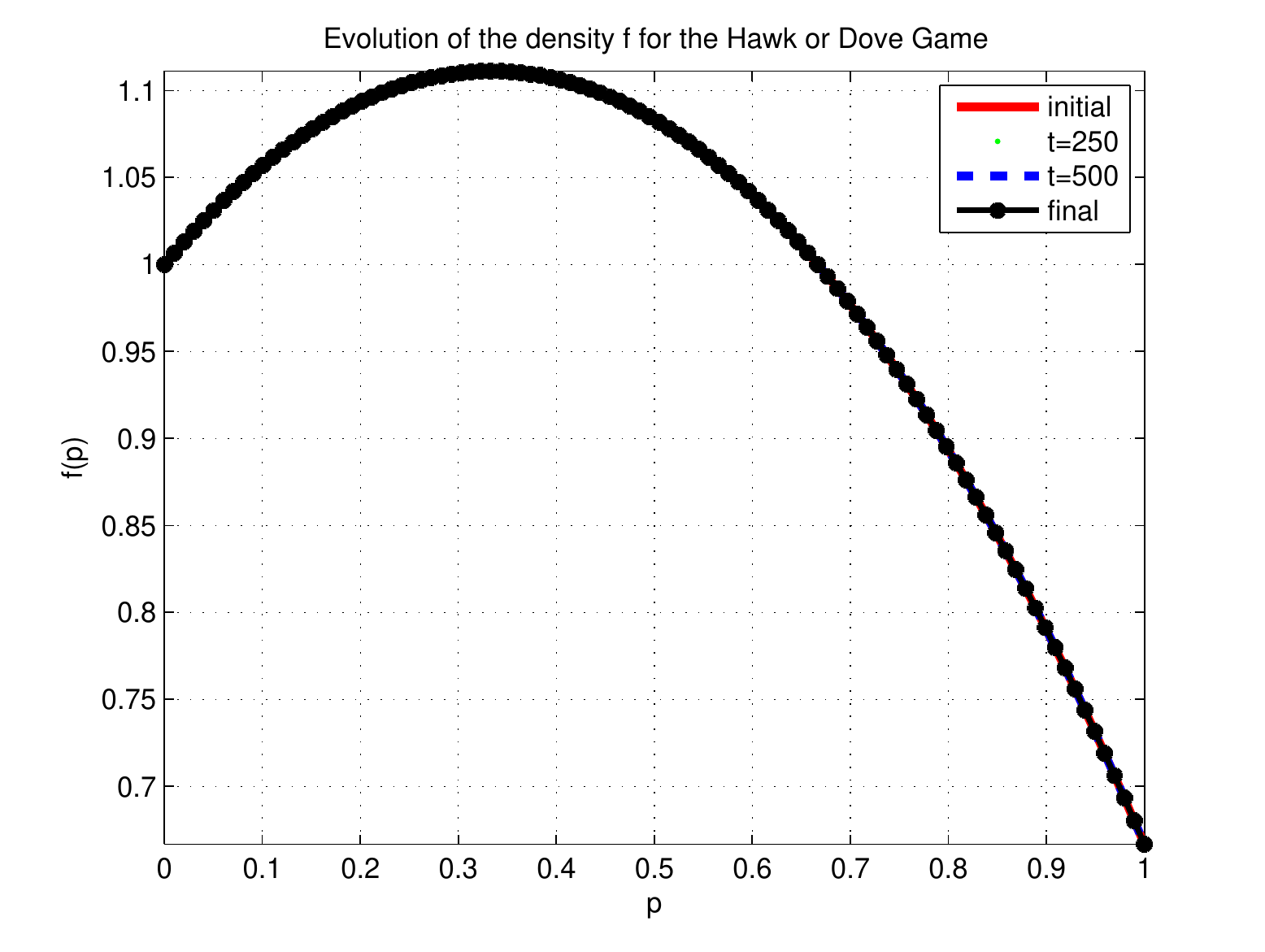,height=6cm,width=11cm} \caption{Hawk
or Dove Game, quadratic stationary solution: we plot the evolution
over time of $f(t,p)$ for the Cauchy problem (\ref{eq:model}) with
initial datum (\ref{HD:dato2}) for $T=1000$.} \label{fig:testHD1}
\end{figure}

As a consequence of Proposition \ref{prop:nostable}, the stationary
solution (\ref{HD:dato2}) is not linearly stable, in fact
\begin{equation*}
M_2(f_0(p))-(M_1(f_0(p)))^2=\int_0^1 \left( \left(-p^3+\dfrac{2}{3}p^2+p \right)\left(p-\dfrac{17}{36} \right) \right) dp\cong 0.07.
\end{equation*}
Actually, even small perturbations of the datum can generate large
perturbations on the solutions. We consider a perturbation with
zero mass for the function (\ref{HD:dato2}):
\begin{equation}\label{eq:pert2}
\tilde{f}_0(p)=-p^2+\dfrac{2}{3}p+1+0.02\sin(2\pi p),
\end{equation}
$\forall\, p\in [0,1]$. \figurename~\ref{fig:testHD3} shows the evolution of $f(t,p)$ for the related Cauchy problem.
\begin{figure}[htp]
\center \epsfig{file=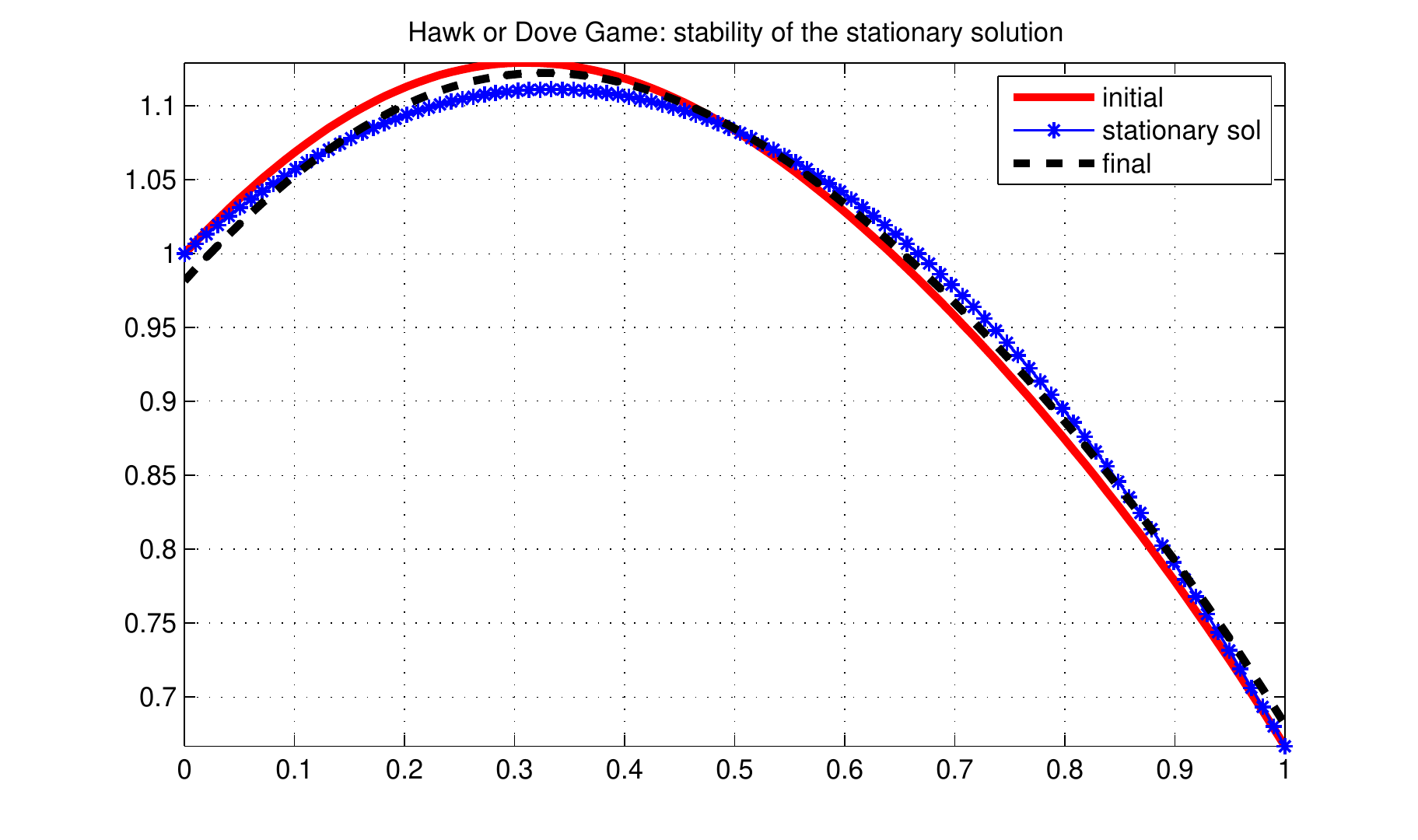,height=6cm,width=11cm}
\caption{Hawk or Dove Game, no stability of quadratic stationary
solution: we plot the evolution over time of $f(t,p)$ for the
Cauchy problem (\ref{eq:model}) with initial datum
(\ref{eq:pert2}) for $T=1000$.} \label{fig:testHD3}
\end{figure}
%\newpage
The perturbed datum (\ref{eq:pert2}) originates the loss of the
stationary solution, as seen in \figurename~\ref{fig:testHD3}. The
solution of the problem evolves (slowly) towards a Dirac mass and
we can see the first moment $M_1$ which converges to the value
$-\frac{\beta}{\alpha}=0.4722$ (\figurename~\ref{fig:HDcurva}), as
expected since $(\alpha,\beta)$ are in the region $D$.
\begin{figure}[htp]
\center \epsfig{file=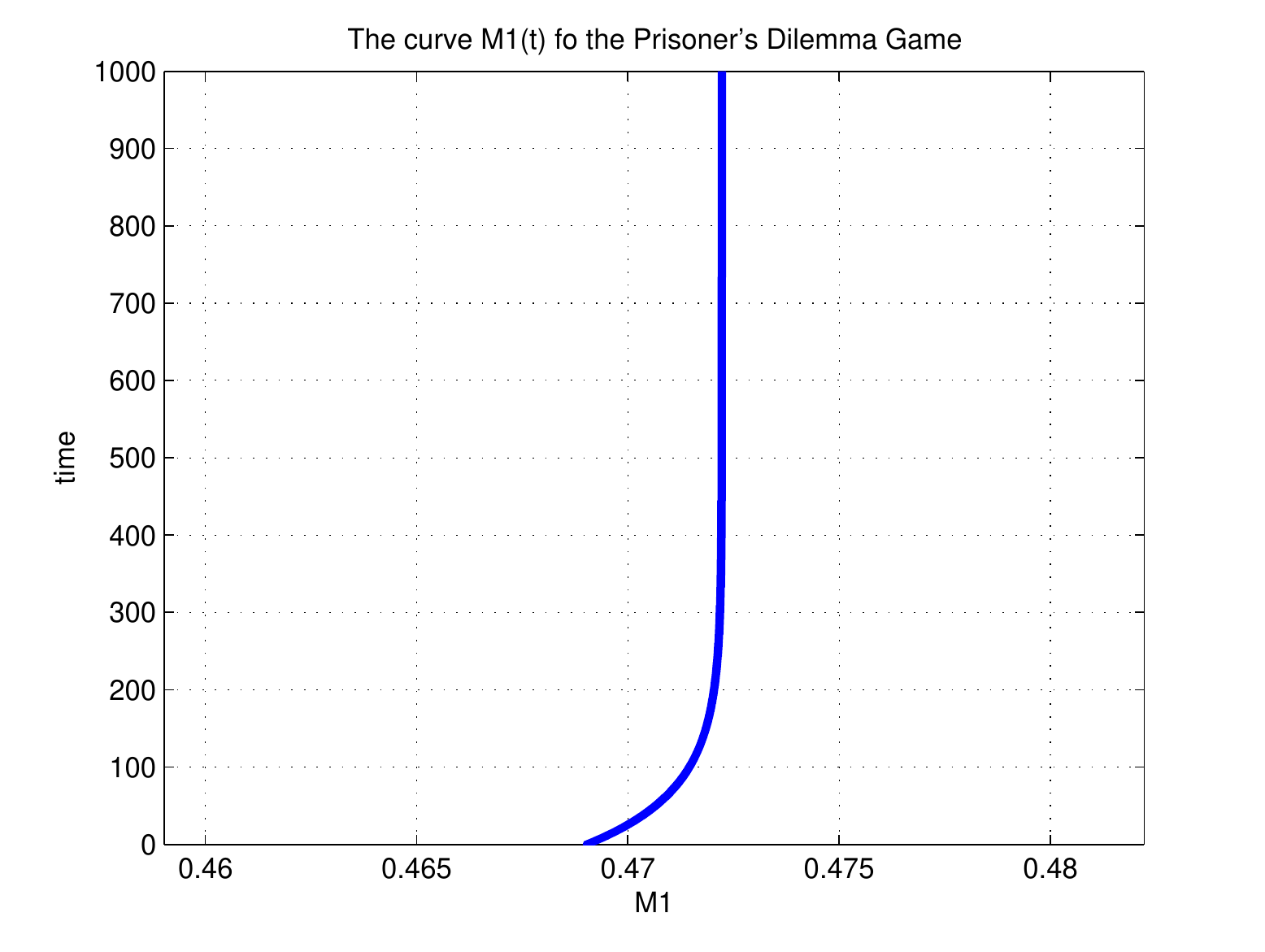,height=6cm,width=11cm}
\caption{Hawk or Dove Game, asymptotic behavior: we have $t$ on
the y-axes and $M_1(t)$ on the x-axes for the Cauchy problem
(\ref{eq:model}) with initial datum (\ref{eq:pert2}) for
$T=1000$.} \label{fig:HDcurva}
\end{figure}
\section{Three strategies games}
Assume there are three different strategies, whose interplay is ruled by the payoff matrix:
\[ \acal = \left(\begin{matrix} a_1 & a_2 & a_3\cr a_4 & a_5 & a_6 \cr a_7 & a_8 & a_9\end{matrix}\right).\]

We have a population where individuals are going to play strategy
A with probability $p_1$, strategy B with probability $p_2$ and
strategy C with probability $1-p_1-p_2$, for $(p_1,p_2)\in
\mathcal{T}_2$, where the simplex $\mathcal{T}_2$ is just
\begin{equation}\label{T2}
\mathcal{T}_2=\lbrace {\bf{p}}=(p_1,p_2)\in \R^2 \,|\, p_1,\, p_2 \geq 0,\, p_1+p_2\leq 1\rbrace.\end{equation}
The payoff (\ref{eq:payoff}) is given by
\begin{equation}\label{mix_payoff_3}
\begin{array}{ll}
A({\bf{p}},{\bf{p^*}})=&\left(\begin{array}{c} p_1 \\ p_2 \\ 1-p_1-p_2 \end{array}\right) \left(\begin{matrix} a_1 & a_2 & a_3\cr a_4 & a_5 & a_6 \cr a_7 & a_8 & a_9\end{matrix}\right)
\left(\begin{array}{c} p_1^* \\p_2^* \\ 1-p_1^*-p_2^* \end{array}\right)\\ \\
&=(a_1-a_3-a_7+a_9)p_1p_1^*+(a_2-a_3-a_8+a_9)p_1p_2^*\\ \\
&+(a_4-a_6-a_7+a_9)p_1^*p_2+(a_5-a_6-a_8+a_9)p_2p_2^*+(a_3-a_9)p_1\\ \\
&+(a_7-a_9)p_1^*+(a_6-a_9)p_2+(a_8-a_9)p_2^*+a_9\\ \\
&=\alpha p_1p_1^*+\beta p_1p_2^* +\gamma p_1^*p_2+\delta
p_2p_2^*+\sigma p_1+\eta p_1^*+\xi p_2+\mu p_2^*+\iota,\\ \\
\end{array}\end{equation} with $\alpha :=a_1-a_3-a_7+a_9$,
$\beta:=a_2-a_3-a_8+a_9$, $\gamma:=a_4-a_6-a_7+a_9$,
$\delta:=a_5-a_6-a_8+a_9$, $\sigma:=a_3-a_9$, $\eta:=a_7-a_9$,
$\xi:=a_6-a_9$, $\mu:=a_8-a_9$ and $\iota:=a_9$.
In this case, the problem (\ref{CauchyNdim}) is\\
\begin{equation} \label{Cauchy2dim}
\begin{cases}
\partial_t f({\bf{p}})=F(f) \quad t\geq 0,\, {\bf{p}}\in \mathcal{T}_2,\\
f(0,{\bf{p}})=f_0({\bf{p}}),
\end{cases}
\end{equation}
where the source term $F(f)$ is defined as follows:
\begin{eqnarray}\label{sorgente2D}
F(f)&:=&f({\bf{p}}) [(\alpha M_{(1,0)}(f)+\beta M_{(0,1)}(f)+\sigma)(p_1-M_{(1,0)}(f))\\
&+&(\gamma M_{(1,0)}(f)+\delta M_{(0,1)}(f)+\xi)(p_2-M_{(0,1)}(f))].\nonumber
\end{eqnarray}
We consider the initial datum $f_0({\bf{p}})$ such that $f_0({\bf{p}})\geq 0$ and $\int_{\mathcal{T}_2}f_0({\bf{p}})d{\bf{p}}=1$.
\begin{remark}\label{2Dstationary}
It is easy to prove that if
\begin{equation}\label{eq:hpstationary}
\dfrac{\beta \, \xi - \sigma \, \delta}{\delta \, \alpha - \gamma
\, \beta}>0 \qquad \mbox{and} \qquad \dfrac{\sigma \, \gamma - \xi
\, \alpha}{\delta \, \alpha - \gamma \, \beta}>0,
\end{equation}
then every distribution function $\bar{f}(\bf{p})$ with
$$M_{(1,0)}=\dfrac{\beta \, \xi - \sigma \, \delta}{\delta \,
\alpha - \gamma \, \beta} \qquad \mbox{and} \qquad
M_{(0,1)}=\dfrac{\sigma \, \gamma - \xi \, \alpha}{\delta \,
\alpha - \gamma \, \beta}$$ is a stationary solution for the
problem (\ref{Cauchy2dim}). Actually, by arguing as in Section 3,
also Dirac masses concentrated on points are stationary solutions
of these equations.
\end{remark}
\subsection{A special case: the Rock-Scissors-Paper Game}\label{RSP}
We consider the Rock-Scissors-Paper game, which is characterized
by having three pure strategies such that $R_1$ is beaten by
$R_2$, which is beaten by $R_3$, which is beaten by $R_1$. The
outcomes of the game are tabulated as
\begin{equation} \label{eq:payoffRSPgame}
\acal = \left(\begin{matrix} 0 & 1 & -1\cr -1 & 0 & 1 \cr 1 & -1 & 0\end{matrix}\right).
\end{equation}\\
In the Rock-Scissors-Paper game, the constants that appear in the source term (\ref{sorgente2D}) have the following values:
$\alpha=0$, $\beta=3$, $\sigma=-1$, $\gamma=-3$, $\delta=0$, $\xi=1$, and so
$$\dfrac{\beta \, \xi - \sigma \, \delta}{\delta \, \alpha - \gamma \, \beta}=\dfrac{\sigma \, \gamma - \xi \, \alpha}{\delta \, \alpha - \gamma \, \beta}=\dfrac{1}{3}.$$\\
The initial datum $f_0({\bf{p}})=2$ has integral over $\mathcal{T}_2$ equal to $1$ and the moments $M_{(1,0)}(2)=M_{(0,1)}(2)=\dfrac{1}{3}$
and so it is a stationary solution for this game. In the next Section \ref{2Dapproximation} we will present the numerical results related
to this stationary solution. \\

Now we want to present a result about the curve of changing sign for $\partial_t f$:
for this game the source term (\ref{sorgente2D}) is
$$F(f)=f\,[(3M_{(0,1)}(f)-1)p_1+(1-3M_{(1,0)}(f))p_2+M_{(1,0)}(f)-M_{(0,1)}(f)],$$
and so we have that the curve $\bar{p}(t)\subset \R^2$ such that $\partial_t f(\bar{p}(t))=0,$ is given by
\begin{equation}\label{eq:curveRSPgame}
p_1(3M_{(0,1)}-1)+p_2(1-3M_{(1,0)})+M_{(1,0)}-M_{(0,1)}=0,\qquad \forall\,(M_{(1,0)},M_{(0,1)})\in \mathcal{T}_2,
\end{equation}
that is the straight line joining the points $(M_{(1,0)}(f),
M_{(0,1)}(f))$ and $\left(\dfrac{1}{3}, \dfrac{1}{3}\right)$. In
the following Section \ref{2Dapproximation} we will present the
evolution over time of this straight line.

\section{Numerical approximation for the 3-strategies model}\label{2Dapproximation}
First we want to construct a numerical method for problem
(\ref{Cauchy2dim}). The domain $\mathcal{T}_2$ is just the
triangle with vertices $(0,0)$, $(1,0)$, $(0,1)$. In order to make
the numerical integrations, we fix a discretization step $\Delta p$ and a
uniform triangular grid in $\mathcal{T}_2$ as
\figurename~\ref{fig:T2} shows. Each point of the grid is
\begin{equation} \label{grid}
{\bf{p}}_{ij}:=(p_{1,i},p_{2,j}) \qquad i=0,\dots I,\,\,j=0,\dots, I-i,
\end{equation}
with $I:=\dfrac{1}{\Delta p}$. We use the notation
$g_{i,j}:=g(t,p_{1,i},p_{2,j})$ for all $i=0,\dots,I$ and
$j=0,\dots, I-i$ to indicate the value of a general function
$g(t,p_1,p_2)$ at each grid point ${\bf{p}}_{ij}$.
%\newpage
\begin{figure}[htp]
\center
\epsfig{file=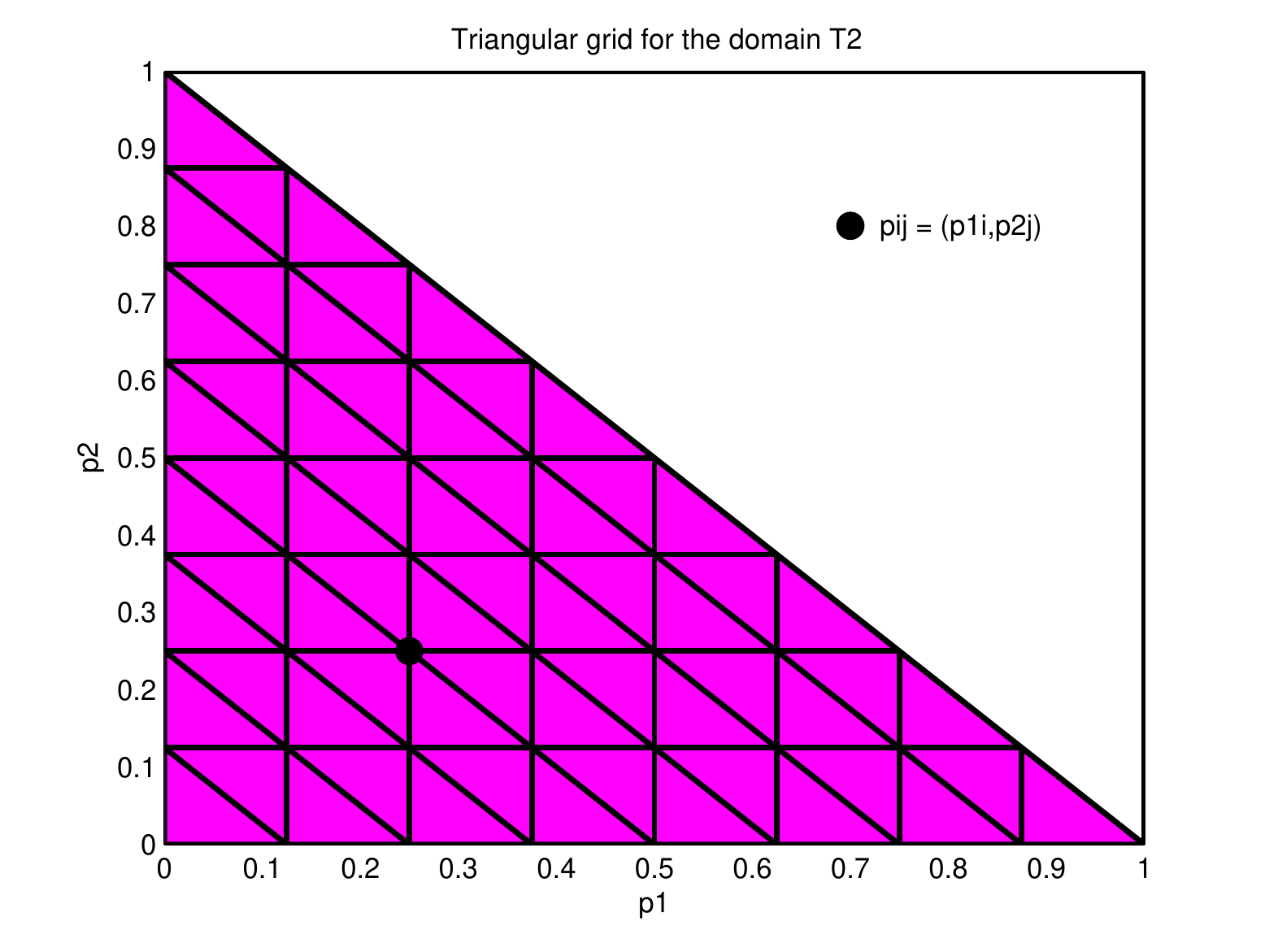, scale=0.6}
\caption{An example of triangular grid for the domain $\mathcal{T}_2$.} \label{fig:T2}
\end{figure}
In order to discretize the integral of $g$ over the domain
$\mathcal{T}_2$, we start to consider each triangle of the grid
and indicate its vertices as $(x_s,y_s)$, for $s=1,2,3$.
%The area of each triangle is given by $\dfrac{1}{2}h^2$.
We define the following quantities:
\begin{eqnarray}\label{maxmin}
\overline{g}:=\max (g(x_s,y_s))\quad s=1,2,3 \\
\underline{g}:=\min (g(x_s,y_s))\quad s=1,2,3,
\end{eqnarray}
the maximum and the minimum value of $g$ on the triangle. On each
triangle of the grid we consider a $2D$ product formula based on the trapezoidal rule:
%we construct a prism that has the triangle as its basis and the value $\underline{g}$ as height, topped by a pyramid that has the value $\bar{g}-\underline{g}$ as its height. The volume of the resulting prismoid is given by
%\begin{equation}\label{volume}
%\dfrac{1}{2}h^2\,\underline{g}\,+\,\dfrac{1}{6}h^2\,(\overline{g}-\underline{g}),
%\end{equation}
%that is the sum of the volume of the prism and the volume of the pyramid. The discretization for the integral of $g$ over $\mathcal{T}_2$ is just the sum of the quantity (\ref{volume}) on each triangle:
\begin{eqnarray}\label{eq:integrale}
&\int_{\mathcal{T}_2}& g(t,p_1,p_2)d{\bf{p}}=\int_0^1\,\int_0^{1-p_1} g(t,p_1,p_2)\,dp_2\,dp_1\nonumber \\ \nonumber \\
&\approx & \dfrac{\Delta p^2}{2}\left[\sum_{i=0}^{I-1}\sum_{j=0}^{I-i} \left(\underline{g}_1+\dfrac{1}{3}(\overline{g}_1-\underline{g}_1)\right)+\sum_{i=1}^{I-1}\sum_{j=0}^{I-i} \left(\underline{g}_2+\dfrac{1}{3}(\overline{g}_2-\underline{g}_2)\right)\right],
\end{eqnarray}
where
\begin{equation}\label{g1}
\overline{g}_1=\max(g_{i,j},g_{i,j+1},g_{i+1,j}), \quad \underline{g}_1=\min(g_{i,j},g_{i,j+1},g_{i+1,j}),
\end{equation}
\begin{equation}\label{g2}
\overline{g}_2=\max(g_{i,j},g_{i,j+1},g_{i-1,j+1}), \quad  \underline{g}_2=\min(g_{i,j},g_{i,j+1},g_{i-1,j+1}).
\end{equation}
The discretization of the first moments $M_{(1,0)}$ and
$M_{(0,1)}$ is easily obtained by (\ref{eq:integrale}),
considering the function $g(t,p_1,p_2)=p_1\,f(t,p_1,p_2)$ for
$M_{(1,0)}(f)$ and $g(t,p_1,p_2)=p_2\,f(t,p_1,p_2)$ for
$M_{(0,1)}(f)$. \\

Similarly to the one-dimensional case it can be shown that the method preserves
the total mass in time, as well as discrete analogous of the steady states.
As before the time discretization is done with a fourth order
Runge-Kutta method with constant time stepping.

\subsection{Numerical tests: The Rock-Scissors-Paper Game}
We consider the Cauchy problem associated to the problem (\ref{Cauchy2dim}) for the Rock-Scissors-Paper game
with the payoff matrix (\ref{eq:payoffRSPgame}). The problem has the following equation:\\
\begin{equation}\label{Cauchy:RSPgame}
\begin{cases}
\partial_t f({\bf{p}})=f({\bf{p}}) [(3M_{(0,1)}(f)-1)(p_1-M_{(1,0)}(f))\\
\quad \quad \quad \quad \quad +(1-3M_{(1,0)}(f))(p_2-M_{(0,1)}(f))],\\
f(0,{\bf{p}})=f_0({\bf{p}}).
\end{cases}
\end{equation}
%\subsubsection{Test n.1} We start to consider the initial datum $f_0({\bf{p}})=2$ that is a stationary solution for the game, as we have said before in Subsection \ref{RSP}. The \figurename~\ref{RSPstazionario} shows that the evolution in time of the maximum of the difference  between $f$ at time $t_k$ and $f$ at time $t_{k+1}$, for all $k$. We can see that this quantity has very small values over time and therefore the numerical scheme preserves the stationarity of the initial datum.
%\begin{figure}[htp]
%\center
%\epsfig{file=RSPfigura1.pdf,height=6cm,width=11cm}
%\caption{Rock-Scissors-Paper Game, test n.1: the evolution in time of the maximum of the difference  between $f$ at time $t_k$ and $f$ at time $t_{k+1}$, for all $k$. $f$ is the %numerical solution of the Cauchy problem (\ref{Cauchy:RSPgame}) with $f_0({\bf{p}})=2$ and $T=20$.}\label{RSPstazionario}
%\end{figure}

\subsubsection{Test n.1} We start with an initial datum compactly supported in $\mathcal{T}_2$: the sum of $s\geq 1$ truncated
Gaussian functions, centered in $(p_1^{0,r},p_2^{0,r})$ for $r=1,\dots ,s$. To ensure that (\ref{massaf}) applies, we normalize
the datum so that its integral over $\mathcal{T}_2$ is equal to $1$. Our datum is of the following type
\begin{equation}\label{datum}
f_0({\bf{p}}):=\dfrac{G({\bf{p}})}{\int_{\mathcal{T}_2} G({\bf{p}})\,d{\bf{p}}},
\end{equation}
where
\begin{equation}\label{gaussian}
G({\bf{p}}):=\sum_{r=1}^s\,\max\left(\dfrac{1}{2\pi}\,e^{-K_r[(p_1-p_1^{0,r})^2+(p_2-p_2^{0,r})^2]}-0.01, 0\right),
\end{equation}
with $K_r>0$.

\begin{figure}[t]
\center \epsfig{file=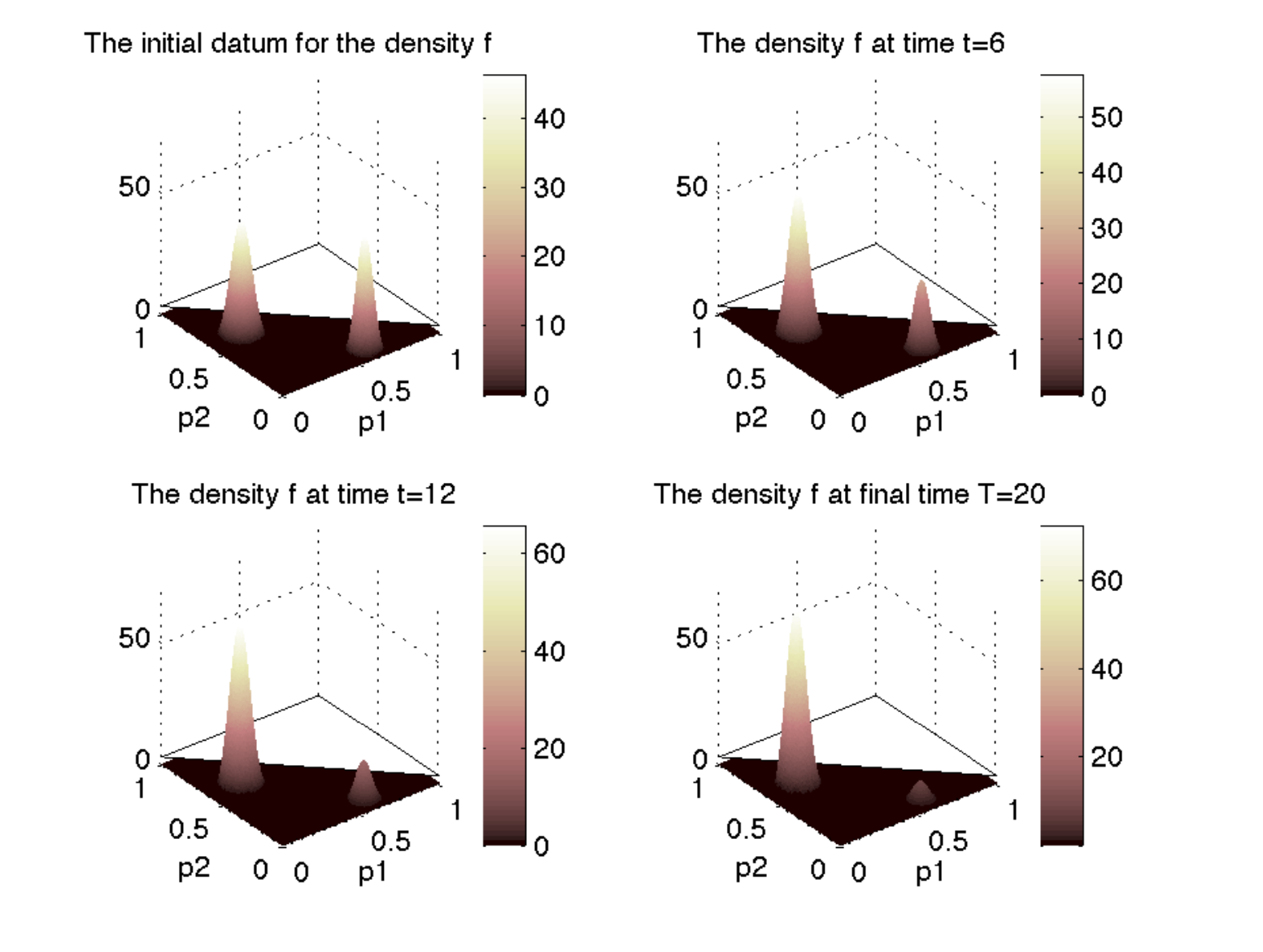,height=7.4cm,width=14cm}
\caption{Rock-Scissors-Paper Game, test 1.1: the evolution of the
density $f$ that is the numerical solution of the Cauchy problem
(\ref{Cauchy:RSPgame})-(\ref{datum}) with $s=2$,
$(p_1^{0,1},p_2^{0,1})=(\frac{1}{10},\frac{3}{5})$,
$(p_1^{0,2},p_2^{0,2})=(\frac{3}{5},\frac{2}{10})$, $K_1=300$,
$K_2=190$ and $T=20$.} \label{RSPcompact1}
\end{figure}

\subsubsection*{Test 1.1}
We fix $s=2$, $(p_1^{0,1},p_2^{0,1})=(\frac{1}{10},\frac{3}{5})$, $(p_1^{0,2},p_2^{0,2})=(\frac{3}{5},\frac{2}{10})$, $K_1=300$, $K_2=190$. \\\\
The graphical results (in \figurename~\ref{RSPcompact1}) shows that there is dominance of one of the groups: the initial datum is likely to
have two areas of concentration, the final configuration shows only one area of concentration, and the total $L^1$ mass remains constantly
equal to $1$ over time. The dominance group is contained in the region where $\partial_t f$ is positive as we can see in
the \figurename~\ref{RSPcurvesign1} that shows the contours of $f$ and the numerical results of the straight line $\bar{p}(t)$ of
changing sign for $\partial_t f$ (see Subsection \ref{RSP}). We also remark that, for all $t>0$, the support of $f(t)$ is equal or a subset
of the support of the initial datum $f_0$: $$supp (f(t))\subseteq supp (f_0),\qquad \forall\, t>0.$$

%\begin{figure}
%\subfigure[]
%{\epsfig{file=RSPtest1a.pdf,height=4cm,width=6.26cm}}
%\hspace{0.5mm}
%\subfigure[]
%{\epsfig{file=RSPtest1b.pdf,height=4cm,width=6.26cm}}
%\end{figure}
%\begin{figure}
%\subfigure[]
%{\epsfig{file=RSPtest1c.pdf,height=4cm,width=6.26cm}}
%\hspace{0.5mm}
%\subfigure[]
%{\epsfig{file=RSPtest1d.pdf,height=4cm,width=6.26cm}}
%\caption{Rock-Scissors-Paper Game, test n. 2.1: the evolution of the distribution function $f$ that is the numerical solution of the Cauchy problem (\ref{Cauchy:RSPgame})-(\ref{datum}) with $s=3$, $(p_1^{0,1},p_2^{0,1})=(\frac{1}{3},\frac{1}{3})$, $(p_1^{0,2},p_2^{0,2})=(\frac{3}{16},\frac{3}{16})$, $(p_1^{0,1},p_2^{0,1})=(\frac{1}{10},\frac{3}{5})$, $K_1=600$, $K_2=300$, $K_3=300$ and $T=20$. In (a) we have the initial configutation, in (b) the function $f$ at time $t=6$, in (c) the function $f$ at time $t=12$, in (d) the final configuration.}\label{RSPcompact1}
%\end{figure}
\begin{figure}[tp]
\center
\epsfig{file=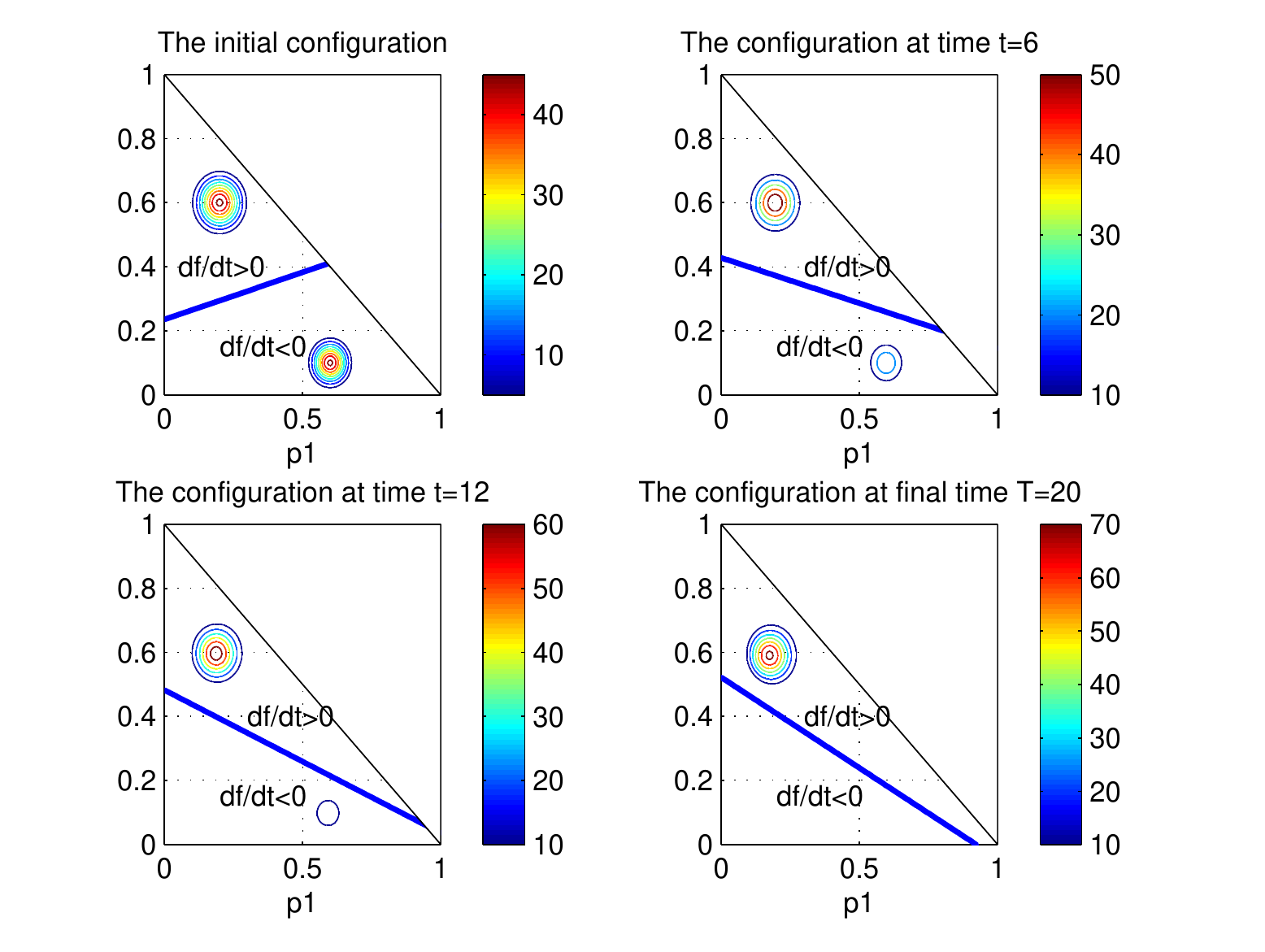,height=7.4cm,width=14cm}
\caption{Rock-Scissors-Paper Game, test 1.1: the evolution over time of the contours of the density $f$ and of the curve $\bar{p}(t)$ of change of sign for $\partial_t\,f$  for the Cauchy problem (\ref{Cauchy:RSPgame})-(\ref{datum}) with the same data and the same parameters as the previous \figurename~\ref{RSPcompact1}. } \label{RSPcurvesign1}
\end{figure}
\subsubsection*{Test 1.2} We fix $s=2$, $(p_1^{0,1},p_2^{0,1})=(\frac{1}{3},\frac{1}{3})$, $(p_1^{0,2},p_2^{0,2})=(\frac{3}{16},\frac{3}{16})$, $K_1=600$, $K_2=300$.\\\\
The graphical results (\figurename~\ref{RSPcompact2} and
\figurename~\ref{RSPcurvesign2}) show that the situation is
different from the previous test: in this case the initial datum
lies between the two regions where $\partial_t f$ is positive and
negative and the straight line $\bar{p}(t)$ of separation between
this two regions does not changes significantly over time. Therefore
the configuration of the function $f$ at the final time $T$ is not
very different from that at the initial time.

\begin{figure}[h]
\center \epsfig{file=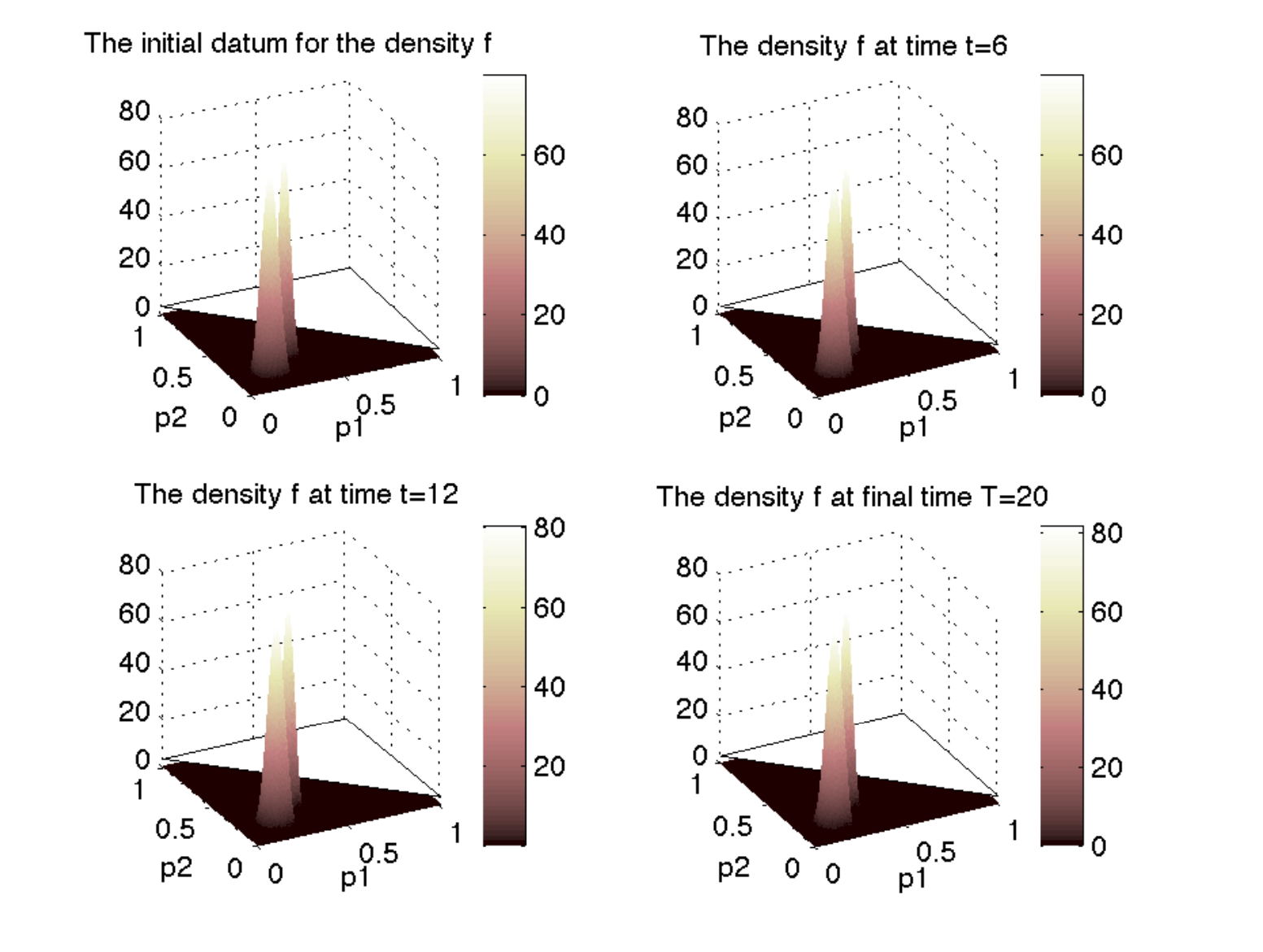,height=7.4cm,width=14cm}
\caption{Rock-Scissors-Paper Game, test 1.2: the evolution of the
density $f$ that is the numerical solution of the Cauchy problem
(\ref{Cauchy:RSPgame})-(\ref{datum}) with $s=2$,
$(p_1^{0,1},p_2^{0,1})=(\frac{1}{3},\frac{1}{3})$,
$(p_1^{0,2},p_2^{0,2})=(\frac{3}{16},\frac{3}{16})$, $K_1=600$,
$K_2=300$ and $T=20$.} \label{RSPcompact2}
\end{figure}
\begin{figure}[htp]
\center
\epsfig{file=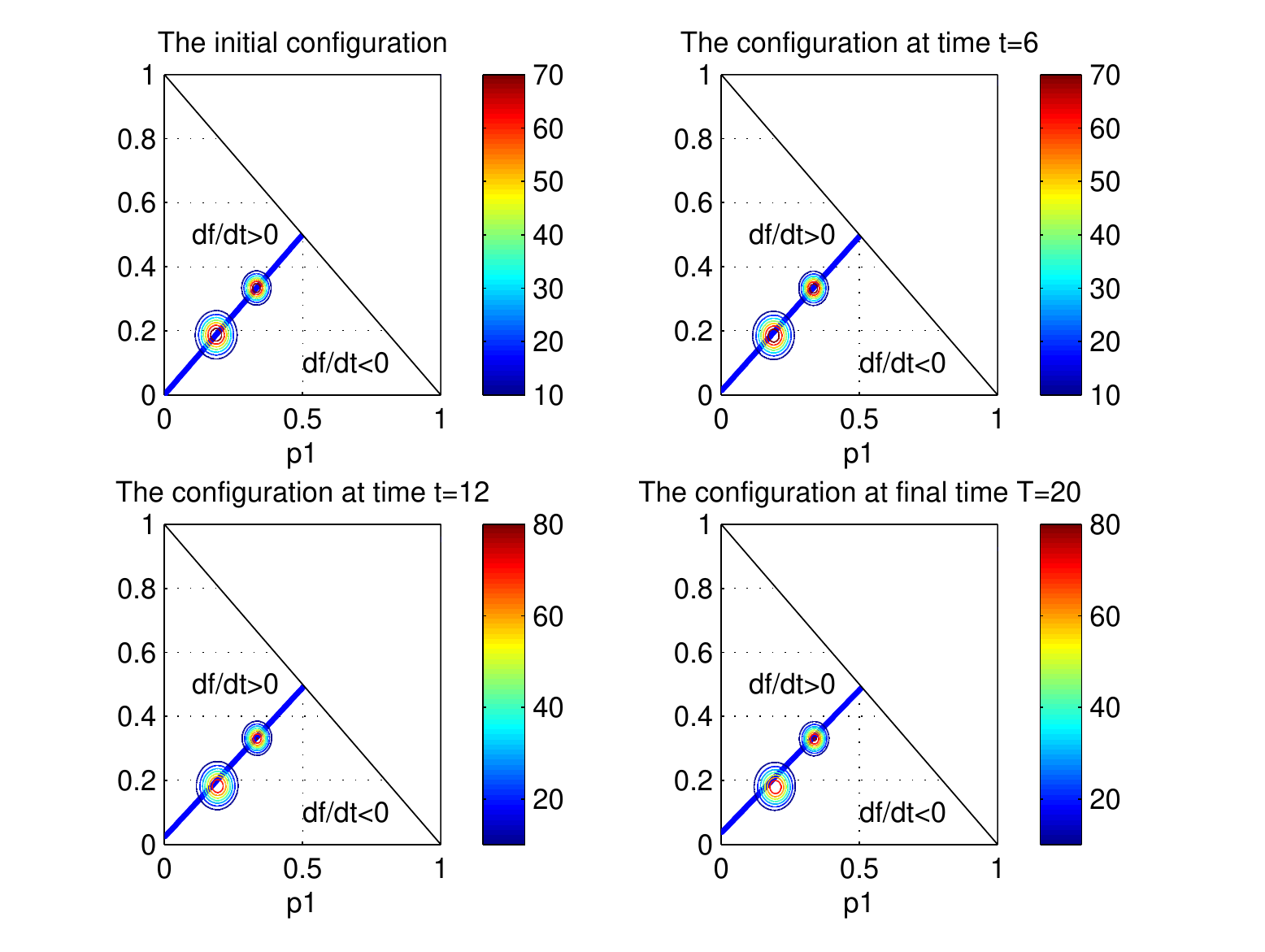,height=7.4cm,width=14cm}
\caption{Rock-Scissors-Paper Game, test 1.2: the evolution over
time of the contours of the density $f$ and of the curve $\bar{p}(t)$ of change of sign for $\partial_t\,f$ for the Cauchy problem
(\ref{Cauchy:RSPgame})-(\ref{datum}) with the same data and the same parameters as the previous \figurename~\ref{RSPcompact2}. } \label{RSPcurvesign2}
\end{figure}
\vspace{0.5cm}
\subsubsection{Test 1.3} We fix $s=3$, $(p_1^{0,1},p_2^{0,1})=(\frac{1}{3},\frac{1}{3})$, $(p_1^{0,2},p_2^{0,2})=(\frac{3}{16},\frac{3}{16})$, $(p_1^{0,3},p_2^{0,3})=(\frac{1}{10},\frac{3}{5})$, $K_1=600$, $K_2=300$ and $K_3=300$.\\\\
The graphical results (\figurename~\ref{RSPcompact3} and
\figurename~\ref{RSPcurvesign3}) show dominance phenomena in the
region where $\partial_t f$ is positive. Initially the three areas
of concentration are located, almost entirely, in the region where
$\partial_t f$ is negative. The time evolution shows us that,
already at $t=6$, two of the three areas of concentration are in
the middle between the two regions where $\partial_t f$ is
negative and positive, then to the final time, are completely in
the region where $\partial_t f$ is positive. So it is clear that
dominance takes place in these areas.

\begin{figure}[h]
\center \epsfig{file=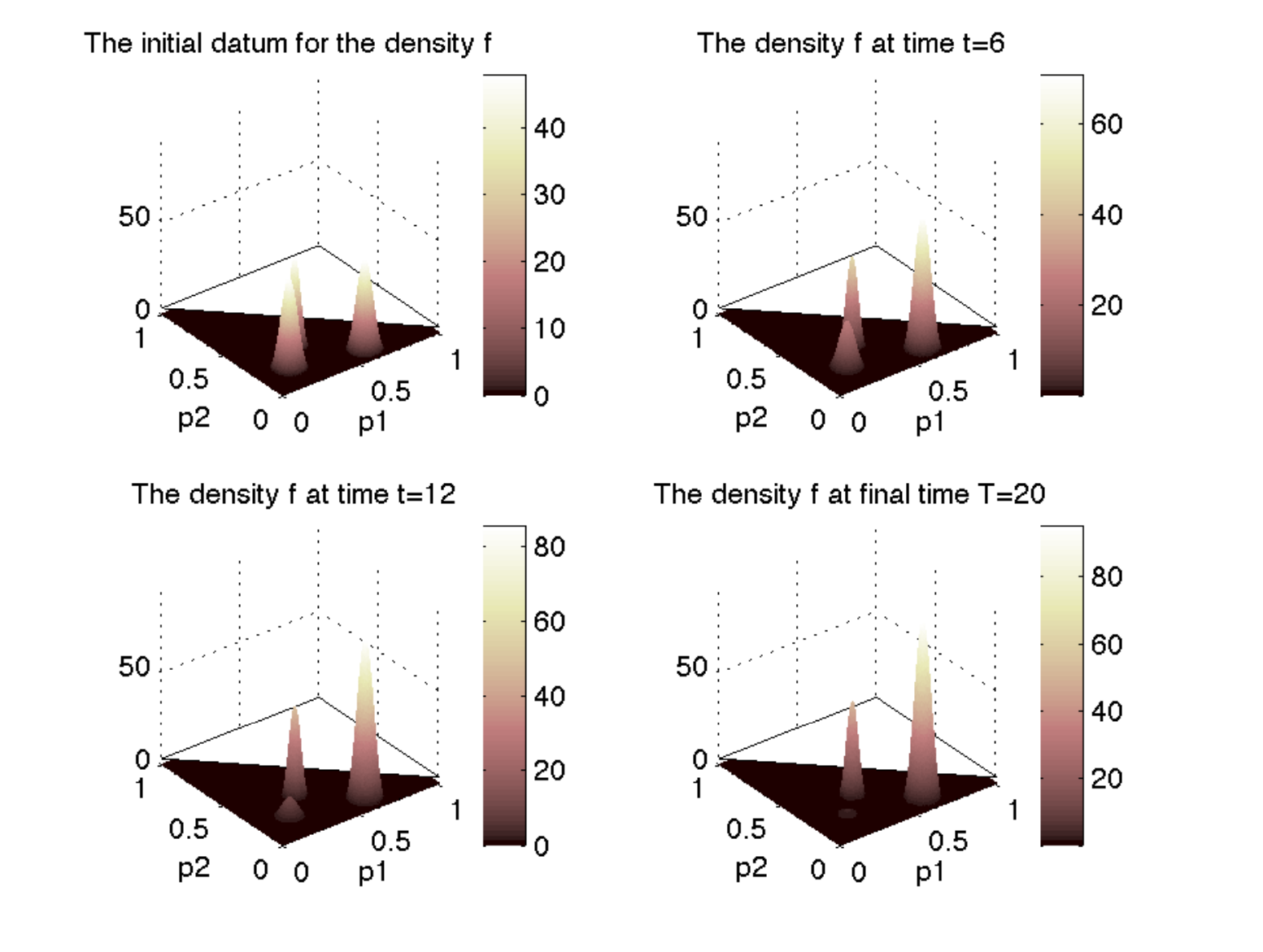,height=7.4cm,width=14cm}
\caption{Rock-Scissors-Paper Game, test 1.3: the evolution of the
density $f$ that is the numerical solution of the Cauchy problem
(\ref{Cauchy:RSPgame})-(\ref{datum}) with $s=3$,
$(p_1^{0,1},p_2^{0,1})=(\frac{1}{3},\frac{1}{3})$,
$(p_1^{0,2},p_2^{0,2})=(\frac{3}{16},\frac{3}{16})$,
$(p_1^{0,3},p_2^{0,3})=(\frac{1}{10},\frac{3}{5})$, $K_1=600$,
$K_2=300$, $K_3=300$ and $T=20$.} \label{RSPcompact3}
\end{figure}
\begin{figure}[htp]
\center
\epsfig{file=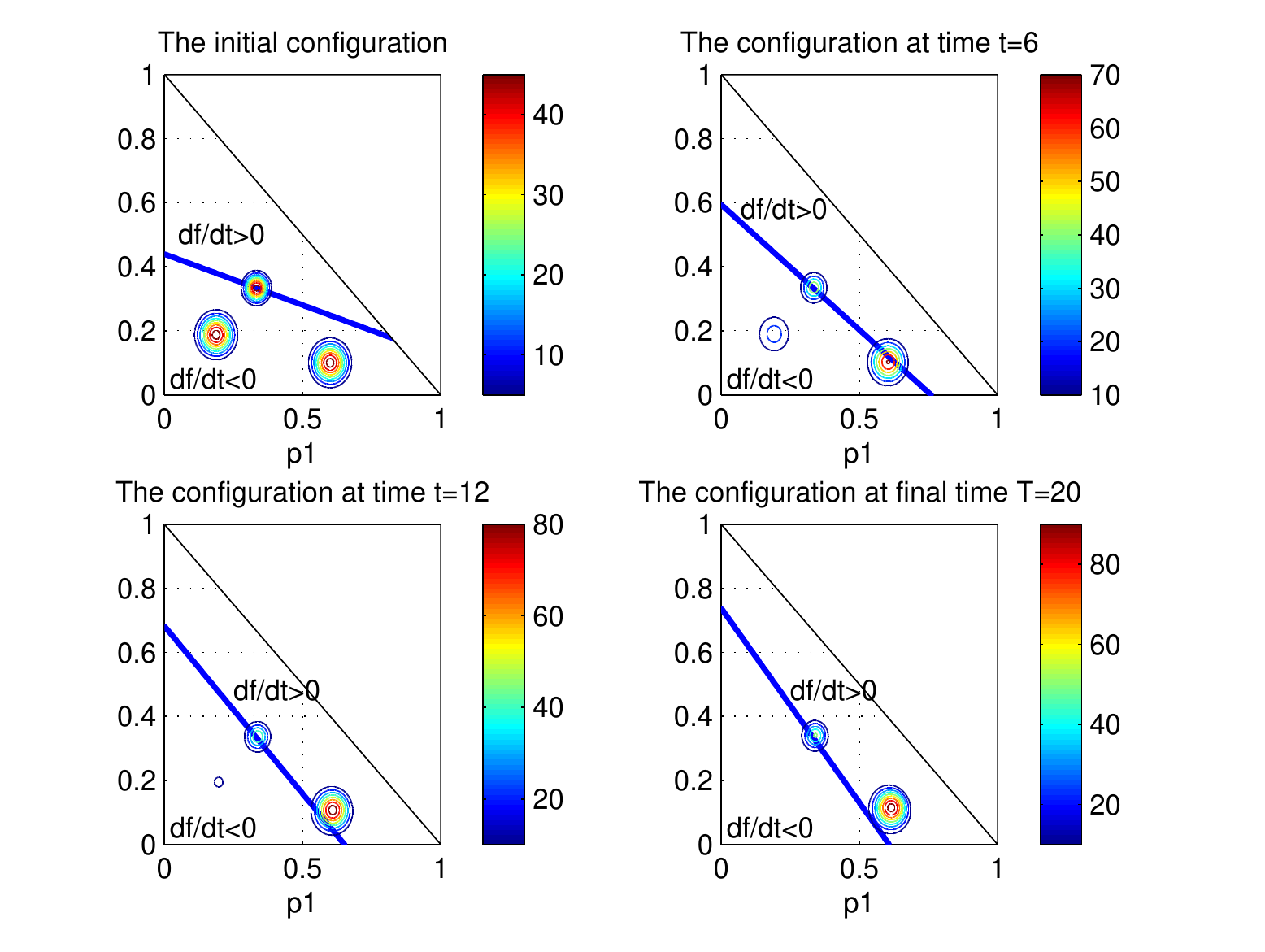,height=7.4cm,width=14cm}
\caption{Rock-Scissors-Paper Game, test 1.3: the evolution over
time of the contours of the density $f$ and of the curve $\bar{p}(t)$ of change of sign for $\partial_t\,f$ for the Cauchy problem
(\ref{Cauchy:RSPgame})-(\ref{datum}) with the same data and the same parameters as the previous \figurename~\ref{RSPcompact3}. } \label{RSPcurvesign3}
\end{figure}

\section{Conclusions}
We have considered a kinetic-like model for the evolution of a
continuous mixed strategy game. The model is based on the time
evolution of a density function describing the density of
population adopting a given strategy. We established several
analytical properties and develop some numerical discretizations
useful for numerical simulations in the case of two and three
strategies. Several explicit examples for two and three strategies
games are reported. Of course when considering more strategies a
deterministic approach may result in excessive computational
requirements and stochastic simulations methods should be
considered \cite{galstyan}.

Let us finally mention that, in the situation considered so far,
each player adopts a strategy and evolution over time leading to
survival or not of the player. In principle it can be interesting
to consider a situation in which each player can change strategy
by a random mutation, so moving through the strategy space. One
can introduce, to this end, a term in the equation that allows for the
random change of strategy, following the ideas presented in
\cite{volpert}. The most natural way to model this phenomenon is
to add a variation term in the equation (\ref{eq:fpmoments}), due
to the probability ${\bf{p}}\in \mathcal{T}_{N-1}$
\begin{equation}\label{diffusive}
\displaystyle \partial_t f-D\,\triangle_{{\bf{p}}}f=f\,\left(\sum_{i=1}^{N-1} (p_i-M_{{\bf{e}}_i}(f))\left(\upsilon_i+\sum_{j=1}^{N-1}\vartheta_{i,j}\,M_{{\bf{e}}_j}(f)\right)\right),
\end{equation}
with $D>0$. The new term $\triangle_{{\bf{p}}}f$ can be
interpreted as a diffusion term describing the spreading of the
population in the probability space from strategy to strategy,
which in evolution models corresponds to a random mutation
mechanism, and will be the object of a future work. A similar
model has been presented recently in \cite{ruijgrok}.

%\newpage

\end{document}